\documentclass[twocolumn]{autart}    

\usepackage{graphicx}          

\usepackage{cite}
\usepackage{amsmath}
\usepackage{amsfonts}
\usepackage{amssymb}

\usepackage{algorithmic}
\usepackage{graphicx}
\usepackage{textcomp}
\usepackage{booktabs}
\usepackage{xcolor}

\newtheorem{theorem}{Theorem}
\newtheorem{lemma}{Lemma}

\newenvironment{proof}{{\itshape Proof.}}{\hspace*{0pt}\hfill$\qed$}
\newcounter{remarkcounter}
\newenvironment{remark}{\refstepcounter{remarkcounter}{\bfseries Remark~\arabic{remarkcounter}.}}{\hspace*{0pt}\hfill$\blacksquare$}

\allowdisplaybreaks
\begin{document}

\begin{frontmatter}

\title{Distributed Observers with Dynamic Event-Triggered Communication} 

\thanks[footnoteinfo]{This paper was not presented at any IFAC meeting. This work was supported in part by the Natural Science Foundation of China under Grants 62273227 and 92367203. Corresponding author Xianwei Li.}

\author[a]{Yiyang Liu}\ead{lyy237000@sjtu.edu.cn},    
\author[a,b,c]{Xianwei Li}\ead{xianwei.li@sjtu.edu.cn},               
\author[a,b]{Shaoyuan Li}\ead{syli@sjtu.edu.cn}  

\address[a]{School of Automation and Intelligent Sensing, Shanghai Jiao Tong University, Shanghai 200240, China}  
\address[b]{Key Laboratory of System Control and Information Processing, Ministry of Education of China, Shanghai 200240,China}
\address[c]{State Key Laboratory of Submarine Geoscience, Shanghai Jiao Tong University, Shanghai 200240, China}             

\begin{keyword}                           
Distributed observers, dynamic event-triggering mechanisms, linear time-invariant (LTI) systems, minimum inter-event times (MIETs).               
\end{keyword}                             

\begin{abstract}                          
This paper studies the problem of distributed state estimation of linear time-invariant (LTI) systems under event-triggered communication. For event-triggering mechanisms, the existence of positive minimum inter-event times (MIETs) is an essential property for ensuring practicality. It is widely recognized that dynamic event-triggering mechanisms can effectively reduce redundant communication. However, for distributed observers, it remains unclear whether dynamic event-triggering mechanisms can ensure positive MIETs. This paper proposes a dynamic event-triggered distributed observer. By introducing new comparison functions, it is proven that the dynamic event-triggered distributed observer can guarantee strictly positive MIETs and ensure the exponential convergence of the estimation error. Moreover, most existing works on event-triggered distributed observers only consider node-based event-triggering mechanisms, while both node-based and edge-based dynamic event-triggering mechanisms are constructed in this paper. Numerical examples are provided to illustrate the effectiveness of the proposed results.
\end{abstract}

\end{frontmatter}

\section{Introduction}
State estimation for dynamic systems has long been a fundamental problem in modern control theory. For large-scale dynamic systems, due to the limited observation capabilities of a single agent, it is common to deploy multiple agents to collectively estimate the system state, i.e., distributed state estimation. Typical applications of distributed state estimation include target tracking in sensor networks \cite{petitti2011consensus}, environmental monitoring \cite{park2016optimal}, power network monitoring \cite{pasqualetti2012distributed}, multiagent consensus tracking \cite{lv2019adaptive,lv2020adaptive} etc.

There are two primary approaches to distributed state estimation: distributed Kalman filter and distributed observer. The distributed Kalman filter exchanges information with neighboring nodes and iteratively updates both state estimates and covariance matrices (see \cite{olfati2005distributed,olfati2007distributed,olfati2009kalman,kamgarpour2008convergence,matei2012consensus,kim2016distributed} for details). Although the distributed Kalman filter demonstrates superior performance, it suffers from high computational complexity.

In contrast to distributed Kalman filter, a distributed observer stabilizes estimation errors using fewer computational resources. The main challenge in the distributed observer-based approach is that each local observer can only access partial system information, making it unable to recover the full state on its own. To tackle this challenge, a state-augmented distributed observer is proposed in \cite{park2016design}, where the distributed estimation problem is transformed into a decentralized stabilization problem. The authors in \cite{wang2017distributed} propose a simpler state-augmented observer than the one in \cite{park2016design}, featuring a tunable error convergence rate. Unlike state-augmented observers, the methods proposed in \cite{zhu2014cooperative,mitra2018distributed} avoid state augmentation. These observers consist of a Luenberger-type local observer part and a  consensus part. In \cite{zhu2014cooperative}, the output injection gain and coupling gain are co-designed to achieve state omniscience. In \cite{mitra2018distributed}, the multi-sensor observable canonical decomposition is proposed, by which each agent is associated with an observable substate and an unobservable substate. The observable substate is estimated by a Luenberger-type observer, while the unobservable substate is corrected by a consensus-based update rule. Notably, although the aforementioned observers operate in a distributed fashion, their design is centralized. To address this limitation, the work \cite{kim2019completely} proposes a fully distributed observer whose design requires no global information. Building upon the foundation laid in \cite{kim2019completely}, several extensions are developed in \cite{wang2020consensus,zhang2023distributed,zhang2025distributed}. In \cite{wang2020consensus}, a distributed reduce-order observer is proposed. The distributed state estimation problem under switching communication networks was investigated in \cite{zhang2023distributed}. Subsequently, this work was extended to additionally consider modeling uncertainty in \cite{zhang2025distributed}.

Note that all the aforementioned approaches require persistent communication between agents, incurring great communication burden. To reduce communication burden, event-triggering mechanisms offer an efficient solution \cite{heemels2012introduction,dimarogonas2011distributed,zhan2025dynamic,girard2014dynamic,qian2018output,qian2021design,meng2013event,meng2015periodic}. In particular, agents sample and transmit their states to neighbors only when certain predefined condition is satisfied. Event-triggering mechanisms can be broadly classified into two main categories: static event-triggering mechanisms and dynamic event-triggering mechanisms. It has been indicated in \cite{girard2014dynamic,dolk2017output,ding2018overview} that the dynamic event-triggering mechanisms can reduce more redundant information transmissions without compromising performance, when compared with static ones. Over the past few years, much interest has been drawn by dynamic event-triggered distributed observers \cite{liu2021resilient,su2024adaptive,long2025adaptive,zhu2025hybrid}.

A key consideration for event-triggering mechanisms is Zeno behavior, which refers to the the phenomenon of infinite triggers occurring in finite time and thus must be avoided. In \cite{liu2021resilient,su2024adaptive,long2025adaptive,sun2020event}, the event-triggered distributed observers are proven to be Zeno-free. However, the standard notion of Zeno-freeness is not concerned with the limiting behavior of inter-event times as time goes to infinity. In other words, even if an event-triggering mechanism is proven to be Zeno-free, its inter-event intervals may still decrease over time and even converge to zero in the limit \cite{nowzari2019event}. Thus, merely ensuring Zeno-freeness is insufficient for practicality of event-triggered sampling algorithms. A stronger and more practical property than Zeno-freeness is the existence of a positive lower bound of minimum inter-event times (MIETs). To the best of our knowledge, aside from the works \cite{zhu2025hybrid,li2025distributed}, there are quite few works on event-triggered distributed observers which can ensure positive MIETs. In \cite{zhu2025hybrid,li2025distributed}, to achieve a guarantee of positive MIETs, either time-regularization or periodic checking technique is employed therein. However, the previous works without or with positive MIETs guarantees \cite{liu2021resilient,su2024adaptive,long2025adaptive,zhu2025hybrid,sun2020event,li2025distributed} have overlooked whether the original dynamic event-triggering mechanism can ensure positive MIETs. \textit{Although it is well-known that centralized dynamic event-triggered stabilization can ensure positive MIETs \cite{girard2014dynamic}, it is unclear whether this property can be achieved for distributed dynamic event-triggered state estimation. Addressing this gap is one of the motivations for this work.}  

Notably, the event-triggering mechanisms presented in the aforementioned works \cite{sun2020event,liu2021resilient,su2024adaptive,long2025adaptive,zhu2025hybrid} are node-based, which assumes that once an event is triggered for an agent, all its neighbors can receive the sampled information simultaneously. Node-based event-triggering mechanisms are suitable for communication protocols through broadcasting, but are not compatible with communication protocols through one-to-one connection. For the latter case, edge-based event-triggering mechanisms are a feasible solution. An edge-based event-triggered distributed observer is proposed in \cite{li2025distributed}. By setting triggering conditions for each communication link between agents, each agent can communicate asynchronously with its neighbors. However, in \cite{li2025distributed}, the observer design requires solving a linear matrix inequality (LMI), whose feasibility is not guaranteed. Specifically, a necessary condition for this inequality to be feasible is that the undetectable subsystem is stable. Besides, in \cite{li2025distributed}, the edges $(i,j)$ and $(j,i)$ are required to be triggered simultaneously. \textit{To the best of our knowledge, aside from \cite{li2025distributed}, research on the edge-based event-triggered distributed observer remains limited, which is another motivation of this work.}

%

In this work, dynamic event-triggered distributed observers are designed, which includes node-based and edge-based type. Moreover, we prove that the proposed observers can ensure exponential convergence of the estimation error and strictly positive MIETs. Compared with related works, the main contributions of this paper are summarized as follows:
\begin{enumerate}
	\item[1)] Dynamic event-triggered distributed observers with positive MIET guarantees are proposed. Compared with related works \cite{sun2020event,liu2021resilient,su2024adaptive,long2025adaptive} which merely exclude Zeno behavior, the existence of a positive lower bound of MIETs provides more guarantees than Zeno-freeness for reliably implementing an event-triggering mechanism. Notably, in \cite{zhu2025hybrid,li2025distributed}, strictly positive MIETs are ensured either by time-regularization or by adopting a periodic event-triggering mechanism, whereas we prove that the existence of the positive MIETs is an intrinsic property of the designed dynamic event-triggering mechanisms. 
	\item[2)] Both node-based and edge-based event-triggering mechanisms are designed. Apart from \cite{li2025distributed}, there is limited work on edge-based event-triggered distributed observers. Notably, the observer design in \cite{li2025distributed} entails solving a LMI, the feasibility of which necessarily requires the stability of the undetectable subsystem. Our method, by contrast, does not suffer from this limitation. Additionally, different from \cite{li2025distributed} where edges $(i,j)$ and $(j,i)$ are triggered synchronously, in our method, all edges are triggered asynchronously, which is more practical.
\end{enumerate}

This paper is organized as follows: Section \ref{Preliminaries} states preliminaries and problem formulation. Section \ref{Node-based Event-Triggered Distributed Observer} and Section \ref{Edge-based Event-Triggered Distributed Observer} present node-based and edge-based event-triggered distributed observers, respectively. Section \ref{Numerical Examples} presents the simulation results. Section \ref{Conclusion} concludes this paper and discusses future research directions.

\section{Preliminaries and Problem Formulation} \label{Preliminaries}
\subsection{Notation}
In this paper, denote $I_n$ as the identity matrix of dimension $n\times n$. Denote $\mathbb{R}^{m\times n}$ as the set of $m\times n$ dimensional real matrices. For a  matrix $A\in \mathbb{R}^{m\times n}$, the kernel of $A$ is denoted as $\mathrm{ker}(A)\triangleq \{x\in\mathbb{R}^n | Ax=0 \}$, the image of $A$ is denoted as $\mathrm{im}(A)\triangleq \{y\in\mathbb{R}^m|y=Ax, \exists x\in\mathbb{R}^n \}$, and the induced matrix 2-norm of $A$ is denoted as $||A||$. For a symmetric matrix $P$, $\lambda_{\mathrm{min}}$ and $\lambda_{\mathrm{max}}$ denote its minimum and maximum eigenvalue, respectively. For a vector $x$, $||x||$ denotes its 2-norm. For a subspace $\mathcal{Z} \subset \mathbb{R}^n$, the orthogonal complement of $\mathcal{Z}$ is denoted as $\mathcal{Z}^\bot \triangleq \{x\in\mathbb{R}^n | x^\mathrm{T} z = 0, \forall z\in\mathcal{Z}\}$. The symbol $\lor$ denotes the logical OR operation. Assume that a set of matrices $\{R_i | i = 1,2,...,m \}$ and an index set $\mathcal{M}=\{1,2,...,m\}$ are given. We define a concatenated matrix $\mathrm{col}\{R_i\}_{i\in\mathcal{M}}\triangleq \begin{bmatrix}
	R_1^\mathrm{T} & R_2^\mathrm{T} & \cdots & R_m^\mathrm{T} 
\end{bmatrix}^\mathrm{T}$ and let $\mathrm{diag}\{R_i\}_{i\in \mathcal{M}}$ denote the block-diagonal matrix with $R_i$, $i\in \mathcal{M}$ as its diagonal blocks.

\subsection{Graph Theory}
To describe the communication topology among the agents, an undirected graph $\mathcal{G}(\mathcal{V},\mathcal{E},\mathcal{A})$ is employed in this paper, where $\mathcal{V}\triangleq \{1,2,...,N\}$ denotes the node set, $\mathcal{E} \subseteq \mathcal{V}\times\mathcal{V}$ denotes the edge set, and $\mathcal{A}\triangleq[a_{ij}]_{N\times N}$ denotes the adjacency matrix. If $(j,i)\in\mathcal{E}$, it means that agent $i$ can communicate with agent $j$ and $(j,i)\notin\mathcal{E}$ otherwise. Besides, self-connection is excluded, i.e., $(i,i)\notin \mathcal{E}$. Define the neighbor set of agent $i$ as $\mathcal{N}_i\triangleq \{j\in\mathcal{V} | (j,i)\in\mathcal{E}\}$. Let $N_i$ be the number of agents in $\mathcal{N}_i$. For adjacency matrix $\mathcal{A}$, $a_{ij}>0$ if $j\in\mathcal{N}_i$ and $a_{ij}=0$ otherwise. Since the graph is undirected, we have $a_{ij}=a_{ji}$. Denote $\mathcal{L}\triangleq[l_{ij}]_{N\times N}$ as the Laplacian matrix, where $l_{ii}=\sum_{k\in\mathcal{N}_i} a_{ik}$ for $i\in\mathcal{V}$ and $l_{ij}=-a_{ij}$ for $i,j\in\mathcal{V}$ and $i\neq j$.

\subsection{Problem Formulation}
In this paper, we consider a linear time-invariant system 
\begin{align}
	\dot{x} = Ax, \label{LTI system}
\end{align}
where $A\in \mathbb{R}^{n\times n}$ is the system matrix, and $x\in \mathbb{R}^n$ is the system state. The system state is monitored by $N$ agents.  
Each agent can only obtain partial information about the system state due to the limited sensing capabilities. Specifically, agent $i$ receives a partial measurement of the state in form of
\begin{align}
	y_i = H_i x, \label{y_i}
\end{align}
where $H_i \in \mathbb{R}^{m_i \times n}$ is the measurement output matrix.
By stacking each local measurement $y_i$, the entire measurement model is formulated as
\begin{align}
	\begin{bmatrix}
		y_1 \\
		y_2 \\
		\vdots \\
		y_N
	\end{bmatrix} = \begin{bmatrix}
		H_1 \\
		H_2 \\
		\vdots \\
		H_N
	\end{bmatrix}x \triangleq Hx. \label{y}
\end{align}
In this paper, the pair $(A,H)$ is assumed to be detectable while the pair $(A,H_i)$ is not required to be detectable.


Each agent maintains an estimate $\hat{x}_i(t)$ of the system state. It is assumed that each agent  communicates with its neighbors according to an undirected communication topology described by $\mathcal{G}$. The aim of this paper is to design a distributed observer that enables each agent to asymptotically reconstruct the system state, i.e.,
$	\lim_{t\rightarrow \infty} (\hat{x}_i(t) - x(t)) = 0$ for $i=1,2,...,N$. Most existing works are based on continuous information interaction between agents (e.g., \cite{kim2019completely,wang2020consensus,zhang2023distributed,zhang2025distributed}), which requires persistent inter-agent communication and thus brings about great communication burdens. In this paper, to reduce the communication burden, we will design dynamic event-triggering mechanisms to govern the information transmission of each agent. In particular, it is expected that the designed dynamic event-triggering mechanism can guarantee strictly positive MIETs in order to keep its practicality.

\subsection{Detectability Decomposition} \label{detectablity decomposition section}
We first recall the definition of undetectable subspace of the pair $(A,H_i)$ and introduce some notation for later use. Denote the characteristic polynomial of $A$ as $\mathcal{F}_A(s)$, factored as $\mathcal{F}_A(s)= \mathcal{F}_A^-(s)\mathcal{F}_A^+(s)$, where the roots of $\mathcal{F}_A^-(s)$ and $\mathcal{F}_A^+(s)$ are in the open left and the closed right half-planes of the complex plane, respectively. Then the undetectable subspace of the pair $(A,H_i)$ is defined as
\begin{align}
	\mathcal{U}_i \triangleq \cap_{k=1}^n \mathrm{ker}(H_i A^{k-1})\cap\mathrm{ker}(\mathcal{F}_A^+(A)), \nonumber
\end{align}
whose dimension is denoted as $p_i$.

The pair $(A,H_i)$ is said to be detectable if $\mathcal{U}_i=\{0\}$.
Define $D_i \in \mathbb{R}^{n\times(n-p_i)}$ and $U_i \in \mathbb{R}^{n\times p_i}$ as the matrices whose columns are the orthonormal basis of $\mathcal{U}_i^\bot$ and $\mathcal{U}_i$,  respectively. Then it follows that 
\begin{align}
	\mathcal{U}_i = \mathrm{im}(U_i) = \mathrm{ker}(D_i^\mathrm{T}). \nonumber
\end{align}

Here, we recall the detectability decomposition for the pair $(A,H_i)$. Define an orthonormal matrix $T_i$ as
\begin{align}
	T_i \triangleq \begin{bmatrix}
		D_i & U_i
	\end{bmatrix}. \label{transformation matrix}
\end{align}
The pair $(A, H_i)$ can be transformed by the change of basis matrix $T_i$ into the form of
\begin{align}
	T_i^\mathrm{T} A T_i = \begin{bmatrix}
		A_{i\mathrm{d}} & 0 \\
		A_{i\mathrm{r}} & A_{i\mathrm{u}}
	\end{bmatrix}, \,\, H_i T_i = \begin{bmatrix}
		H_{i\mathrm{d}} & 0
	\end{bmatrix}, \nonumber
\end{align}
where $A_{i\mathrm{d}}\in \mathbb{R}^{(n-p_i)\times(n-p_i)}$, $A_{i\mathrm{r}} \in\mathbb{R}^{p_i \times (n-p_i)}$, $A_{i\mathrm{u}}\in \mathbb{R}^{p_i\times p_i}$ and $H_{i\mathrm{d}} \in \mathbb{R}^{m_i \times (n-p_i)}$. Moreover, the pair $(A_{i\mathrm{d}},H_{i\mathrm{d}})$ is detectable.

\begin{lemma} [\!\!\cite{kim2019completely}] \label{lemma 2.1}
	If the communication graph $\mathcal{G}$ is connected and the pair $(A, H)$ is detectable, the matrix $U^{\mathrm{T}} (\mathcal{L} \otimes I_n)U$ is positive definite, where $U = \mathrm{diag}\{U_i\}_{i\in\mathcal{V}}$.
\end{lemma}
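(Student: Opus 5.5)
Since $U^{\mathrm{T}}(\mathcal{L}\otimes I_n)U$ is symmetric and positive semidefinite, it suffices to show that its kernel is trivial. Semidefiniteness holds because $\mathcal{L}\otimes I_n$ is symmetric positive semidefinite for an undirected graph with nonnegative weights. So we will take a vector $z=\mathrm{col}\{z_i\}_{i\in\mathcal{V}}$ with $z_i\in\mathbb{R}^{p_i}$ and $z^{\mathrm{T}}U^{\mathrm{T}}(\mathcal{L}\otimes I_n)Uz=0$, and show that $z=0$.

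Set $w_i=U_iz_i\in\mathcal{U}_i$. The Laplacian quadratic form gives
\begin{align}
z^{\mathrm{T}}U^{\mathrm{T}}(\mathcal{L}\otimes I_n)Uz=\tfrac12\sum_{i}\sum_{j\in\mathcal{N}_i}a_{ij}\|w_i-w_j\|^2 = 0. \nonumber
\end{align}
Since every $a_{ij}>0$ on edges, $w_i=w_j$ for every edge. Connectedness of $\mathcal{G}$ then gives $w_1=\cdots=w_N\triangleq w$. In particular $w\in\bigcap_{i\in\mathcal{V}}\mathcal{U}_i$.

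The key step is to show $\bigcap_i\mathcal{U}_i=\{0\}$ under detectability of $(A,H)$. Take $w$ in the intersection. Then $H_iA^{k-1}w=0$ for all $i$ and $k=1,\dots,n$, so $HA^{k-1}w=0$ for all $k$ and $w$ lies in the unobservable subspace of $(A,H)$. Also $w\in\ker(\mathcal{F}_A^+(A))$, which is the $A$-invariant generalized eigenspace for eigenvalues in the closed right half-plane. The intersection $\mathcal{W}$ of these two $A$-invariant subspaces is $A$-invariant. If $\mathcal{W}\neq\{0\}$, the restriction $A|_{\mathcal{W}}$ has an eigenvector $v$ (complex if necessary), with eigenvalue $\lambda$. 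Since $\mathcal{W}\subseteq\ker(\mathcal{F}_A^+(A))$, we have $\mathrm{Re}\,\lambda\ge 0$. Since $\mathcal{W}$ lies in the unobservable subspace, $Hv=0$. This contradicts detectability of $(A,H)$ via the PBH test. Hence $w=0$.

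Finally, $U_iz_i=0$ and $U_i$ has orthonormal columns, so $z_i=0$ for every $i$, i.e.\ $z=0$, which gives positive definiteness. The step needing the most care is the invariant-subspace/PBH argument, in particular using the characterization $\ker\mathcal{F}_A^+(A)$ of the unstable generalized eigenspace and passing to complex eigenvectors. Everything else is routine.
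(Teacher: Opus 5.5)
Your proof is correct and is essentially the standard argument from the cited reference; the paper itself only quotes the lemma. Positive semidefiniteness and the Laplacian quadratic form force $U_1z_1=\cdots=U_Nz_N=w\in\bigcap_{i\in\mathcal{V}}\mathcal{U}_i$, and this intersection equals $\bigcap_{k=1}^n\ker(HA^{k-1})\cap\ker(\mathcal{F}_A^+(A))$, the undetectable subspace of $(A,H)$, which is therefore trivial. Your PBH/invariant-subspace step is valid, though it is only needed if detectability is read in the PBH sense: under the paper's definition (trivial undetectable subspace), $w=0$ follows immediately.
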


\section{Node-based Event-Triggered Distributed Observer} \label{Node-based Event-Triggered Distributed Observer}
\subsection{Observer Design}

We first design a dynamic event-triggered distributed observer for the system (\ref{LTI system}) with measurement (\ref{y}). It consists of $N$ local observers, and the local observer of agent $i$ has the dynamics
\begin{align}
	\dot{\hat{x}}_i (t) &= A \hat{x}_i(t) + L_i (y_i(t) - H_i \hat{x}_i(t)) \nonumber \\
	&\quad - cM_i \sum_{j\in \mathcal{N}_i} a_{ij} (\bar{x}_i(t)-\bar{x}_j(t)), \label{observer n}
\end{align}
where $\bar{x}_i(t) = \mathrm{e}^{A(t-t_k^i)} \hat{x}_i(t_k^i)$, $t \in \left[t_k^i,t_{k+1}^i\right)$, with $\{t_k^i\}_{k\in\mathbb{N}}$ being the sampling instant sequence for agent $i$. Moreover, the gain $c$ is chosen as any constant satisfying
\begin{align}
	c \geq \frac{2(2 + \|A_\mathrm{u}+A_\mathrm{u}^\mathrm{T}\|)}{\lambda_{\mathrm{min}}(U^\mathrm{T} (\mathcal{L}\otimes I_n) U)}, \label{choice of c n}
\end{align} 
where $A_\mathrm{u} = \mathrm{diag}\{A_{i\mathrm{u}}\}_{i\in\mathcal{V}}$ and the graph $\mathcal{G}$ is assumed to be connected so that the right hand side of (\ref{choice of c n}) is a finite constant by Lemma \ref{lemma 2.1}.
The gain matrices $L_i$ and $M_i$ are designed as
\begin{align}
	L_i &= T_i \begin{bmatrix}
		L_{i\mathrm{d}} \\
		0
	\end{bmatrix} , \,\,
	M_i = T_i \begin{bmatrix}
		0 & 0\\
		0 & I_{p_i}
	\end{bmatrix} T_i^\mathrm{T} , \label{gain matrix}
\end{align}
where $T_i$ is the transformation matrix given in (\ref{transformation matrix}), $p_i$ is the dimension of the undetectable subspace of $(A,H_i)$, and $L_{i\mathrm{d}}$ is chosen such that $A_{i\mathrm{d}}-L_{i\mathrm{d}}H_{i\mathrm{d}}$ is Hurwitz. Therefore, there always exists a positive definite matrix $P_{i\mathrm{d}}$ such that 
\begin{align}
	(A_{i\mathrm{d}}-L_{i\mathrm{d}}H_{i\mathrm{d}})^\mathrm{T} P_{i\mathrm{d}} + P_{i\mathrm{d}} (A_{i\mathrm{d}}-L_{i\mathrm{d}}H_{i\mathrm{d}}) = -I_{n-p_i}. \label{Riccati equation}
\end{align}

We now present the node-based event-triggering mechanism. Define $\tilde{x}_i = \bar{x}_i - \hat{x}_i$ and perform the detectability decomposition on $\tilde{x}_i$ using $T_i$ as $\tilde{x}_i = T_i \begin{bmatrix}
	\tilde{x}_{i\mathrm{d}} \\
	\tilde{x}_{i\mathrm{u}}
\end{bmatrix}$. The sampling instant sequence $\{t_k^i\}_{k\in\mathbb{N}}$ is generated by the following dynamic event-triggering mechanism:
\begin{align}
	t_{k+1}^i = \inf \{t>t_k^i|f_i(t) \geq 0 \lor t-t_k^i \geq \bar{\tau}_i\},  \quad t_0^i = 0  , \label{ET condition n}
\end{align}
where
\begin{align}
	f_i(t) &= 4l_{ii} \| \tilde{x}_{i\mathrm{u}} \|^2  - \beta \sum_{j \in \mathcal{N}_i} a_{ij} \| \bar{x}_i -  \bar{x}_j \|^2 - \kappa_i \rho_i(t), \label{ET function n}
\end{align}
and $\rho_i(\cdot) : \left[0,\infty\right) \rightarrow \left(0,\infty\right)$ evolves according to
\begin{align}
	\dot{\rho}_i &= -\delta_i \rho_i - 4l_{ii} \| \tilde{x}_{i\mathrm{u}} \|^2   \nonumber \\
	&\quad     +  \beta \sum_{j \in \mathcal{N}_i} a_{ij} \| \bar{x}_i -  \bar{x}_j \|^2 + \gamma_i \|H_i \hat{x}_i-y_i\|^2. \label{dynamic variable n} 
\end{align}
In (\ref{ET condition n})--(\ref{dynamic variable n}), $\bar{\tau}_i$, $\delta_i$, $\kappa_i$, $\gamma_i$ and $\rho_i(0)$ are any positive constants, and $0<\beta<1$. Note that the triggering condition (\ref{ET condition n}) ensures that $f_i(t)<0$ for all $ t\geq 0$. Therefore, it is easy to show that $\rho_i(t) >0$ for all $t\geq 0$.

\begin{remark}
	The observer design in (\ref{observer n}) is inspired by \cite{kim2019completely}. Different from the work \cite{kim2019completely}, the consensus error term in (\ref{observer n}) consists of sampled state estimates. Different from other existing works \cite{sun2020event,liu2021resilient,su2024adaptive,long2025adaptive,zhu2025hybrid,li2025distributed}, the event-triggering mechanism (\ref{ET condition n}) incorporates a variable $\rho_i$ whose dynamics are related to the error information in terms of $\tilde{x}_{i\mathrm{u}}$ and $\|\bar{x}_i - \bar{x}_j\|^2$. As shown later, introducing these terms is crucial for proving the existence of strictly positive MIETs.
\end{remark}

\subsection{Convergence Analysis}

In this section, we prove that the observer (\ref{observer n}) together with the dynamic event-triggering mechanism (\ref{ET condition n}) guarantees the exponential convergence of the estimation error $\eta_i \triangleq \hat{x}_i-x$.

\begin{theorem} \label{theorem 3.1}
	Consider the system (\ref{LTI system}) and (\ref{y}) over the communication network $\mathcal{G}$. Suppose that $\mathcal{G}$ is undirected and connected, and the pair $(A,H)$ is detectable. Then the observer (\ref{observer n})--(\ref{gain matrix}) with the node-based dynamic event-triggering mechanism (\ref{ET condition n}) guarantees that the estimation error $\eta_i$ converges to zero exponentially.
\end{theorem}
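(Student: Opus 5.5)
The plan is a Lyapunov argument in the coordinates of the detectability decomposition, with $\rho_i$ added to the Lyapunov function. The dynamic triggering rule then absorbs the sampling-induced terms.

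\textbf{Error dynamics.} Since $\bar{x}_i=\hat{x}_i+\tilde{x}_i$, the error $\eta_i=\hat{x}_i-x$ satisfies
\begin{align}
\dot{\eta}_i=(A-L_iH_i)\eta_i-cM_i\sum_{j\in\mathcal{N}_i}a_{ij}(\bar{x}_i-\bar{x}_j). \nonumber
\end{align}
Write $T_i^\mathrm{T}\eta_i=\mathrm{col}\{\eta_{i\mathrm{d}},\eta_{i\mathrm{u}}\}$. By (\ref{gain matrix}), $M_i=U_iU_i^\mathrm{T}$, so the consensus term enters only the $\mathrm{u}$-block. This gives
\begin{align}
\dot{\eta}_{i\mathrm{d}}&=(A_{i\mathrm{d}}-L_{i\mathrm{d}}H_{i\mathrm{d}})\eta_{i\mathrm{d}},\nonumber\\
\dot{\eta}_{i\mathrm{u}}&=A_{i\mathrm{r}}\eta_{i\mathrm{d}}+A_{i\mathrm{u}}\eta_{i\mathrm{u}}-cU_i^\mathrm{T}\sum_{j}a_{ij}(\bar{x}_i-\bar{x}_j). \nonumber
\end{align}
The $\mathrm{d}$-part is autonomous and exponentially stable, with $V_\mathrm{d}=\sum_i\eta_{i\mathrm{d}}^\mathrm{T}P_{i\mathrm{d}}\eta_{i\mathrm{d}}$ and $\dot V_\mathrm{d}=-\sum_i\|\eta_{i\mathrm{d}}\|^2$ by (\ref{Riccati equation}). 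Note that $\|H_i\hat{x}_i-y_i\|^2=\|H_{i\mathrm{d}}\eta_{i\mathrm{d}}\|^2$ is bounded by a multiple of $\|\eta_{i\mathrm{d}}\|^2$.

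\textbf{Candidate and the key derivative.} Take
\begin{align}
V=\alpha V_\mathrm{d}+\tfrac12\sum_i\|\eta_{i\mathrm{u}}\|^2+\sum_i\rho_i, \nonumber
\end{align}
with $\alpha$ large. In the $\mathrm{u}$-part, use $\bar{x}_i-\bar{x}_j=\eta_i-\eta_j+\tilde{x}_i-\tilde{x}_j$ and $\eta_i=D_i\eta_{i\mathrm{d}}+U_i\eta_{i\mathrm{u}}$. The coupling contribution is $-c\,\eta_\mathrm{u}^\mathrm{T}U^\mathrm{T}(\mathcal{L}\otimes I_n)U\eta_\mathrm{u}$. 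The remaining contributions are cross terms with $\eta_\mathrm{d}$, handled by Young's inequality and absorbed by $\alpha$, and cross terms with $\tilde{x}$. The $\tilde{x}$ terms are the difficulty: the $\mathrm{d}$-components of $\tilde{x}$ are not governed by the trigger. So instead of expanding $\eta$, I would write the coupling term directly in sampled quantities, $\eta_{i\mathrm{u}}=U_i^\mathrm{T}(\bar{x}_i-x)-\tilde{x}_{i\mathrm{u}}$, and aim for
\begin{align}
-c\,\eta_\mathrm{u}^\mathrm{T}U^\mathrm{T}(\mathcal{L}\otimes I)\bar{x}\le -\tfrac{c}{2}\sum_i\sum_j a_{ij}\|\bar{x}_i-\bar{x}_j\|^2\,(\text{projected})+c\sum_i 2l_{ii}\|\tilde{x}_{i\mathrm{u}}\|^2+(\eta_\mathrm{d}\text{ terms}). \nonumber
\end{align}
The constants $4l_{ii}$ and $\beta$ in $f_i$ are presumably tuned for exactly this step. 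I would also rescale the $\eta_\mathrm{u}$ weight or $c$ so that the $\tilde{x}_{i\mathrm{u}}$ terms match the $4l_{ii}\|\tilde{x}_{i\mathrm{u}}\|^2$ appearing in $\dot\rho_i$.

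\textbf{Using the trigger and concluding.} The trigger keeps $f_i<0$, which means
\begin{align}
4l_{ii}\|\tilde{x}_{i\mathrm{u}}\|^2<\beta\sum_j a_{ij}\|\bar{x}_i-\bar{x}_j\|^2+\kappa_i\rho_i . \nonumber
\end{align}
In $\dot\rho_i$, the terms $-4l_{ii}\|\tilde{x}_{i\mathrm{u}}\|^2+\beta\sum_j a_{ij}\|\bar{x}_i-\bar{x}_j\|^2$ then cancel against the corresponding terms in the $\mathrm{u}$-derivative. The leftover disagreement term, of size $(1-\beta)\sum a_{ij}\|\bar{x}_i-\bar{x}_j\|^2$, is nonnegative and is used for negativity. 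The $\gamma_i\|H_{i\mathrm{d}}\eta_{i\mathrm{d}}\|^2$ term is dominated by $\alpha\|\eta_{i\mathrm{d}}\|^2$. Applying Lemma \ref{lemma 2.1} with (\ref{choice of c n}) gives $c\lambda_{\min}\ge 2(2+\|A_\mathrm{u}+A_\mathrm{u}^\mathrm{T}\|)$, which dominates the $A_\mathrm{u}$ drift, leaving a margin of at least $\|\eta_\mathrm{u}\|^2$. Together these should yield
\begin{align}
\dot V\le -\mu\big(\|\eta_\mathrm{d}\|^2+\|\eta_\mathrm{u}\|^2+\textstyle\sum_i\rho_i\big)\le-\mu' V, \nonumber
\end{align}
where $-\delta_i\rho_i$ supplies the $\rho$ part. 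Hence $V$ decays exponentially. Since $\rho_i>0$, $\|\eta_i\|^2$ is bounded by a multiple of $V$, so $\eta_i\to0$ exponentially. The forced sampling after $\bar{\tau}_i$ excludes Zeno behaviour, so solutions exist for all $t\ge0$.

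The main obstacle is the step that isolates $\tilde{x}_{i\mathrm{u}}$. The trigger controls only the undetectable component of the sampling error, so the argument must be organized so that only $U_i^\mathrm{T}\tilde{x}_i$ enters, since $M_i$ projects onto $\mathcal{U}_i$. It must also verify that the constants $4l_{ii}$, $\beta<1$ and the bound on $c$ close the estimate. If the $\mathrm{d}$-components of $\tilde{x}_j$ (the neighbors' sampling errors) appear, I would bound them via $\|\tilde{x}_j\|$ and the exponential decay of $\eta_{j\mathrm{d}}$ as a vanishing perturbation, using a comparison argument.
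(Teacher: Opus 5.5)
Your skeleton is the paper's: $\rho_i$ is added to $V$, you rewrite $\eta_{i\mathrm{u}}=\bar\eta_{i\mathrm{u}}-\tilde x_{i\mathrm{u}}$ and polarize, and $\dot\rho_i$ cancels the $4l_{ii}\|\tilde x_{i\mathrm{u}}\|^2$ and $\beta$-disagreement terms. But the step you flag as the main obstacle is never closed, and the fix you offer does not work. Since $\bar x_i-\bar x_j=\bar\eta_i-\bar\eta_j$ with $\bar\eta_i=D_i\bar\eta_{i\mathrm{d}}+U_i\bar\eta_{i\mathrm{u}}$, the coupling term necessarily produces a cross term $\eta_{\mathrm{u}}^{\mathrm{T}}U^{\mathrm{T}}(\mathcal{L}\otimes I_n)D\bar\eta_{\mathrm{d}}$. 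Converting the projected quantity $\bar\eta_{\mathrm{u}}^{\mathrm{T}}U^{\mathrm{T}}(\mathcal{L}\otimes I_n)U\bar\eta_{\mathrm{u}}$ into the full disagreement $\sum_j a_{ij}\|\bar x_i-\bar x_j\|^2$ used in $f_i$ and $\dot\rho_i$ costs a further $\frac{\beta}{1-\beta}\bar\eta_{\mathrm{d}}^{\mathrm{T}}D^{\mathrm{T}}(\mathcal{L}\otimes I_n)D\bar\eta_{\mathrm{d}}$. That is where the slack $1-\beta$ is spent, so there is no leftover $(1-\beta)$ disagreement term available ``for negativity''. Neither the trigger nor $V$ directly controls $\bar\eta_{\mathrm{d}}$ (equivalently, the detectable part of $\tilde x_j$). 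Bounding these terms ``via $\|\tilde x_j\|$'' is therefore circular.

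The missing idea is that the detectable part of the sampled error is a bounded linear image of the current detectable error. Because $T_i^{\mathrm{T}}AT_i$ is block lower-triangular and $\bar\eta_i(t)=\mathrm{e}^{A(t-t_k^i)}\eta_i(t_k^i)$, one has $\bar\eta_{i\mathrm{d}}(t)=\mathrm{e}^{A_{i\mathrm{d}}(t-t_k^i)}\eta_{i\mathrm{d}}(t_k^i)$. Meanwhile $\eta_{i\mathrm{d}}(t)=\mathrm{e}^{(A_{i\mathrm{d}}-L_{i\mathrm{d}}H_{i\mathrm{d}})(t-t_k^i)}\eta_{i\mathrm{d}}(t_k^i)$, so
\[
\bar\eta_{i\mathrm{d}}(t)=\mathrm{e}^{A_{i\mathrm{d}}(t-t_k^i)}\mathrm{e}^{-(A_{i\mathrm{d}}-L_{i\mathrm{d}}H_{i\mathrm{d}})(t-t_k^i)}\eta_{i\mathrm{d}}(t)=G_i(t)\,\eta_{i\mathrm{d}}(t).
\]
This is exactly what the clause $t-t_k^i\geq\bar\tau_i$ in the trigger is for. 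It guarantees $t-t_k^i<\bar\tau_i$, hence $\sup_t\|G_i(t)\|<\infty$, and every $\bar\eta_{\mathrm{d}}$ term becomes a constant times $\|\eta_{\mathrm{d}}\|^2$, absorbed by taking $\mu_{\mathrm{d}}$ (your $\alpha$) large.

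Your claim that $\bar\tau_i$ excludes Zeno behaviour misidentifies its role: a maximal inter-event time cannot prevent events from accumulating; that is the content of the separate MIET result. Without this bound, $\bar\eta_{i\mathrm{d}}$ can grow like $\mathrm{e}^{A_{i\mathrm{d}}(t-t_k^i)}$ when $A_{i\mathrm{d}}$ is unstable. So even your ``vanishing perturbation'' fallback needs precisely this fact before it can be made rigorous.

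Note also that the cancellation through $\dot\rho_i$ is exact and does not use $f_i<0$. The trigger is needed only to keep $\rho_i>0$, so that $V$ dominates $\|\eta\|^2$.
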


\begin{proof}
	We first formulate the dynamics for the detectable and undetectable part of the estimation error. The estimation error $\eta_i$ satisfies
	\begin{align}
		\dot{\eta}_i &= (A-L_i H_i)\eta_i-cM_i \sum_{j\in \mathcal{N}_i} a_{ij} (\bar{x}_i-\bar{x}_j) \nonumber \\
		&= (A-L_i H_i)\eta_i-cM_i \sum_{j\in \mathcal{N}_i} a_{ij} (\bar{\eta}_i-\bar{\eta}_j),
		\label{dot eta i n}
	\end{align}
	where $\bar{\eta}_i = \bar{x}_i - x$. Performing the detectability decomposition on $\eta_i$ using $T_i$ yields $\eta_i = T_i \begin{bmatrix}
		\eta_{i\mathrm{d}} \\
		\eta_{i\mathrm{u}}
	\end{bmatrix}$. By (\ref{dot eta i n}), the dynamics of $\eta_{i\mathrm{d}}$ and $\eta_{i\mathrm{u}}$ can be obtained:
	\begin{align}
		\dot{\eta}_{i\mathrm{d}} &= (A_{i\mathrm{d}} - L_{i\mathrm{d}} H_{i\mathrm{d}} ) \eta_{i\mathrm{d}}, \label{dot eta id n} \\
		\dot{\eta}_{i\mathrm{u}} &= A_{i\mathrm{r}} \eta_{i\mathrm{d}} + A_{i\mathrm{u}} \eta_{i\mathrm{u}} - cU_i^\mathrm{T} \sum_{j \in \mathcal{N}_i} a_{ij}(\bar{\eta}_i - \bar{\eta}_j) \nonumber \\
		&= A_{i\mathrm{r}} \eta_{i\mathrm{d}} + A_{i\mathrm{u}} \eta_{i\mathrm{u}} - cU_i^\mathrm{T} \sum_{j =1}^{N} l_{ij}\left(D_j \bar{\eta}_{j\mathrm{d}} + U_j \bar{\eta}_{j\mathrm{u}} \right) , \label{dot eta iu n}
	\end{align}
	where 
	\begin{align}
		\bar{\eta}_i =  T_i\begin{bmatrix}
			\bar{\eta}_{i\mathrm{d}} \\
			\bar{\eta}_{i\mathrm{u}}
		\end{bmatrix}. \label{bar eta i n}
	\end{align} 
	
	Let $\tilde{\eta}_i = \bar{\eta}_i - \eta_i = T_i \begin{bmatrix}
		\tilde{\eta}_{i\mathrm{d}} \\
		\tilde{\eta}_{i\mathrm{u}}
	\end{bmatrix}$. It is easy to obtain that 
	\begin{equation}
		\tilde{\eta}_i=\tilde{x}_i,\,\,\tilde{\eta}_{i\mathrm{d}}=\tilde{x}_{i\mathrm{d}},\,\,\tilde{\eta}_{i\mathrm{u}}=\tilde{x}_{i\mathrm{u}}. \nonumber
	\end{equation}
	Define the concatenated variables 
	\begin{equation}
		\begin{aligned}
			&\eta_\mathrm{d} = \mathrm{col}\{\eta_{i\mathrm{d}}\}_{i \in \mathcal{V}}, \quad \eta_\mathrm{u} = \mathrm{col}\{\eta_{i\mathrm{u}}\}_{i \in \mathcal{V}}, \\
			&\tilde{\eta}_\mathrm{d} = \mathrm{col}\{\tilde{\eta}_{i\mathrm{d}}\}_{i \in \mathcal{V}}, \quad \tilde{\eta}_\mathrm{u} = \mathrm{col}\{\tilde{\eta}_{i\mathrm{u}}\}_{i \in \mathcal{V}},
		\end{aligned} \label{concatenated variables}
	\end{equation}
	and the concatenated matrices $A_\mathrm{d} = \mathrm{diag}\{A_{i\mathrm{d}}\}_{i \in \mathcal{V}}$, $A_\mathrm{r} = \mathrm{diag}\{A_{i\mathrm{r}}\}_{i \in \mathcal{V}}$,  $L_\mathrm{d} = \mathrm{diag}\{ L_{i\mathrm{d}} \}_{i \in \mathcal{V}}$, $H_\mathrm{d} = \mathrm{diag}\{ H_{i\mathrm{d}} \}_{i \in \mathcal{V}}$, $D = \mathrm{diag}\{ D_i \}_{i \in \mathcal{V}}$, $P_\mathrm{d} = \mathrm{diag}\{P_{i\mathrm{d}}\}_{i\in\mathcal{V}}$.
	
		
		We now analyze the convergence of the error dynamics given in (\ref{dot eta id n}) and (\ref{dot eta iu n}). Choose the following candidate Lyapunov function:
		\begin{align}
			V =  \mu_\mathrm{d} \eta_{\mathrm{d}}^\mathrm{T} P_\mathrm{d} \eta_\mathrm{d} +   \eta_{\mathrm{u}}^\mathrm{T} \eta_{\mathrm{u}} + \frac{c}{2} \sum_{i\in\mathcal{V}} \rho_i, \label{Lyapunov function n}
		\end{align} 
		where $P_{\mathrm{d}}$ is a positive-definite matrix satisfying (\ref{Riccati equation}) and $\mu_\mathrm{d}$ is a positive constant to be determined later.
		
		Along the solution to (\ref{dot eta id n}) and (\ref{dot eta iu n}) and by (\ref{Riccati equation}), the time derivative of $V$ is given by
		\begin{align}
			\dot{V} &= -\mu_\mathrm{d} \eta_{\mathrm{d}}^\mathrm{T} \eta_{\mathrm{d}} + \frac{c}{2} \sum_{i\in\mathcal{V}} \dot{\rho}_i + \sum_{i\in\mathcal{V}} 2\eta_{i\mathrm{u}}^\mathrm{T} \Big[ A_{i\mathrm{r}} \eta_{i\mathrm{d}} + A_{i\mathrm{u}} \eta_{i\mathrm{u}} \nonumber \\
			&\quad - cU_i^\mathrm{T} \sum_{j =1}^{N} l_{ij} \left(D_j \bar{\eta}_{j\mathrm{d}} + U_j \bar{\eta}_{j\mathrm{u}} \right)   \Big]  \nonumber \\
			&= -\mu_\mathrm{d} \eta_{\mathrm{d}}^\mathrm{T} \eta_{\mathrm{d}} + \frac{c}{2} \sum_{i\in\mathcal{V}} \dot{\rho}_i  + 2 \eta_\mathrm{u}^\mathrm{T} A_\mathrm{r} \eta_\mathrm{d} + 2 \eta_\mathrm{u}^\mathrm{T} A_\mathrm{u} \eta_\mathrm{u} \nonumber \\
			&\quad  - 2c \eta_\mathrm{u}^\mathrm{T} U^\mathrm{T} (\mathcal{L} \otimes I_n) U \bar{\eta}_\mathrm{u} - 2c \eta_\mathrm{u}^\mathrm{T} U^\mathrm{T} (\mathcal{L} \otimes I_n) D \bar{\eta}_\mathrm{d}. \nonumber 
		\end{align}
		According to Young's inequality, it follows that 
		\begin{align}
			\dot{V}	&\leq -\mu_\mathrm{d} \eta_{\mathrm{d}}^\mathrm{T} \eta_{\mathrm{d}} + \eta_\mathrm{d}^\mathrm{T} A_\mathrm{r}^\mathrm{T} A_\mathrm{r} \eta_\mathrm{d} + \eta_\mathrm{u}^\mathrm{T} \eta_\mathrm{u}  + 2 \eta_\mathrm{u}^\mathrm{T} A_\mathrm{u} \eta_\mathrm{u}   \nonumber \\
			&\quad  - 2c \eta_\mathrm{u}^\mathrm{T} U^\mathrm{T} (\mathcal{L} \otimes I_n) U \bar{\eta}_\mathrm{u} + \frac{c}{2} \eta_\mathrm{u}^\mathrm{T} U^\mathrm{T} (\mathcal{L}\otimes I_n) U \eta_\mathrm{u}  \nonumber \\
			&\quad + 2c \bar{\eta}_\mathrm{d}^\mathrm{T} D^\mathrm{T} (\mathcal{L}\otimes I_n) D \bar{\eta}_\mathrm{d} + \frac{c}{2} \sum_{i\in\mathcal{V}} \dot{\rho}_i. \label{dot V}
		\end{align}
		
		Note that 
		\begin{align}
			&\tilde{\eta}_\mathrm{u}^\mathrm{T} U^\mathrm{T} (\mathcal{L}\otimes I_n) U \tilde{\eta}_\mathrm{u} \nonumber \\
			&= (\bar{\eta}_\mathrm{u} - \eta_\mathrm{u})^\mathrm{T} U^\mathrm{T} (\mathcal{L}\otimes I_n) U (\bar{\eta}_\mathrm{u} - \eta_\mathrm{u}) \nonumber \\
			&= \bar{\eta}_\mathrm{u}^\mathrm{T} U^\mathrm{T} (\mathcal{L}\otimes I_n) U \bar{\eta}_\mathrm{u} +  \eta_\mathrm{u}^\mathrm{T} U^\mathrm{T} (\mathcal{L}\otimes I_n) U \eta_\mathrm{u} \nonumber \\
			&\quad - 2\eta_\mathrm{u}^\mathrm{T} U^\mathrm{T} (\mathcal{L}\otimes I_n) U \bar{\eta}_\mathrm{u}.  \label{eq1 in proof 1}
		\end{align}
		
		In light of $a_{ij} = a_{ji}$ and $U_i^\mathrm{T}U_i = I_{p_i}$, we have
		\begin{align}
			&\tilde{\eta}_\mathrm{u}^\mathrm{T} U^\mathrm{T} (\mathcal{L}\otimes I_n) U \tilde{\eta}_\mathrm{u} \nonumber \\
			&= \frac{1}{2} \sum_{i\in\mathcal{V}} \sum_{j\in\mathcal{N}_i} a_{ij} \Vert U_i \tilde{\eta}_{i\mathrm{u}} -  U_j \tilde{\eta}_{j\mathrm{u}}  \Vert^2 \nonumber \\
			&\leq \sum_{i\in\mathcal{V}} \sum_{j\in\mathcal{N}_i} a_{ij} (\Vert U_i \tilde{\eta}_{i\mathrm{u}} \Vert^2 + \Vert U_j \tilde{\eta}_{j\mathrm{u}}  \Vert^2) \nonumber \\
			&= 2 \sum_{i\in\mathcal{V}} \sum_{j\in\mathcal{N}_i} a_{ij} \Vert U_i \tilde{\eta}_{i\mathrm{u}} \Vert^2 = 2 \sum_{i\in\mathcal{V}} l_{ii} \Vert \tilde{\eta}_{i\mathrm{u}} \Vert^2, \label{ineq2 in proof 1}
		\end{align}
		and 
		\begin{align}
			&\bar{\eta}_\mathrm{u}^\mathrm{T} U^\mathrm{T} (\mathcal{L}\otimes I_n) U \bar{\eta}_\mathrm{u}  \nonumber \\
			&= (\bar{\eta}-D\bar{\eta}_\mathrm{d})^\mathrm{T} (\mathcal{L} \otimes I_n)  (\bar{\eta}-D\bar{\eta}_\mathrm{d}) \nonumber \\
			&= \bar{\eta}^\mathrm{T} (\mathcal{L}\otimes I_n) \bar{\eta} + \bar{\eta}_\mathrm{d}^\mathrm{T} D^\mathrm{T} (\mathcal{L}\otimes I_n) D \bar{\eta}_\mathrm{d} \nonumber \\
			&\quad - 2\bar{\eta}^\mathrm{T} (\mathcal{L}\otimes I_n) D\bar{\eta}_\mathrm{d} \nonumber \\
			&\geq \bar{\eta}^\mathrm{T} (\mathcal{L}\otimes I_n) \bar{\eta} + \bar{\eta}_\mathrm{d}^\mathrm{T} D^\mathrm{T} (\mathcal{L}\otimes I_n) D \bar{\eta}_\mathrm{d} \nonumber \\
			&\quad - (1-\beta) \bar{\eta}^\mathrm{T} (\mathcal{L}\otimes I_n) \bar{\eta} - \frac{1}{1-\beta} \bar{\eta}_\mathrm{d}^\mathrm{T} D^\mathrm{T} (\mathcal{L}\otimes I_n) D \bar{\eta}_\mathrm{d}  \nonumber \\
			&= \beta \bar{\eta}^\mathrm{T} (\mathcal{L}\otimes I_n) \bar{\eta} - \frac{\beta}{1-\beta} \bar{\eta}_\mathrm{d}^\mathrm{T} D^\mathrm{T} (\mathcal{L}\otimes I_n) D \bar{\eta}_\mathrm{d} \nonumber \\
			&= \frac{\beta}{2} \sum_{i\in\mathcal{V}}\sum_{j\in\mathcal{N}_i} a_{ij} \| \bar{\eta}_i -  \bar{\eta}_j \|^2 - \frac{\beta}{1-\beta} \bar{\eta}_\mathrm{d}^\mathrm{T} D^\mathrm{T} (\mathcal{L}\otimes I_n) D \bar{\eta}_\mathrm{d}. \label{eq3 in proof 1}
		\end{align}
		
		
		According to (\ref{dot eta id n}) and (\ref{bar eta i n}), it is easy to obtain that  
		\begin{align}
			\bar{\eta}_{i\mathrm{d}}(t) &= \mathrm{e}^{A_{i\mathrm{d}}(t-t_k^i)} \eta_{i\mathrm{d}}(t_k^i), \nonumber \\
			\eta_{i\mathrm{d}}(t) &= \mathrm{e}^{(A_{i\mathrm{d}}-L_{i\mathrm{d}}H_{i\mathrm{d}})(t-t_k^i)} \eta_{i\mathrm{d}}(t_k^i), \nonumber
		\end{align}
		where $t\in [t_k^i, t_{k+1}^i)$. Therefore, it follows that 
		\begin{align}
			\bar{\eta}_{\mathrm{d}}(t) = G \eta_{\mathrm{d}}(t), \label{relation between eta and eta bar n}
		\end{align}
		where 
		\begin{align}
			G = \mathrm{diag}\{G_i \}_{i\in\mathcal{V}}, \,\, G_i = \mathrm{e}^{A_{i\mathrm{d}}(t-t_k^i)} \mathrm{e}^{(A_{i\mathrm{d}}-L_{i\mathrm{d}}H_{i\mathrm{d}})(t_k^i-t)}. \nonumber
		\end{align}
		The event-triggering condition (\ref{ET condition n}) implies that $t-t_k^i \in [0,\bar{\tau}_i)$, which ensures that every element of $G$ is finite.

		Combining (\ref{dot V})--(\ref{relation between eta and eta bar n}) yields
		\begin{align}
			\dot{V} & \leq -\mu_\mathrm{d} \eta_{\mathrm{d}}^\mathrm{T} \eta_{\mathrm{d}}  + \eta_\mathrm{d}^\mathrm{T} A_\mathrm{r}^\mathrm{T} A_\mathrm{r} \eta_\mathrm{d} + \eta_\mathrm{u}^\mathrm{T} \eta_\mathrm{u}  + 2 \eta_\mathrm{u}^\mathrm{T} A_\mathrm{u} \eta_\mathrm{u}  \nonumber \\
			&\quad - \frac{c}{2}\eta_\mathrm{u}^\mathrm{T} U^\mathrm{T} (\mathcal{L}\otimes I_n) U \eta_\mathrm{u} + \frac{c}{2} \sum_{i\in\mathcal{V}} \Big(\dot{\rho}_i + 4 l_{ii} \Vert \tilde{\eta}_{i\mathrm{u}} \Vert^2    \nonumber \\
			&\quad - \beta \sum_{j\in\mathcal{N}_i} a_{ij} \| \bar{\eta}_i - \bar{\eta}_j \|^2 \Big) + \frac{2-\beta}{1-\beta} \eta_\mathrm{d}^\mathrm{T} Q \eta_\mathrm{d}. \label{ineq5 in proof 1}
		\end{align}
		Note that $\tilde{\eta}_{i\mathrm{u}} = \tilde{x}_{i\mathrm{u}}$ and $\bar{\eta}_i-\bar{\eta}_j=\bar{x}_i-\bar{x}_j$, then substituting (\ref{dynamic variable n}) into (\ref{ineq5 in proof 1}) yields
		\begin{align}
			\dot{V} & \leq -\mu_\mathrm{d} \eta_{\mathrm{d}}^\mathrm{T} \eta_{\mathrm{d}}  + \eta_\mathrm{d}^\mathrm{T} A_\mathrm{r}^\mathrm{T} A_\mathrm{r} \eta_\mathrm{d} +  \eta_\mathrm{u}^\mathrm{T} \eta_\mathrm{u}  + 2 \eta_\mathrm{u}^\mathrm{T} A_\mathrm{u} \eta_\mathrm{u}  \nonumber \\
			&\quad - \frac{c}{2}\eta_\mathrm{u}^\mathrm{T} U^\mathrm{T} (\mathcal{L}\otimes I_n) U \eta_\mathrm{u} + \frac{2-\beta}{1-\beta} \eta_\mathrm{d}^\mathrm{T} Q \eta_\mathrm{d}    \nonumber \\
			&\quad  + \frac{c}{2} \sum_{i\in\mathcal{V}} \Big(-\delta_i \rho_i  + \gamma_i \|H_i \hat{x}_i-y_i\|^2 \Big). \nonumber 
		\end{align}
		Note that $H_i\hat{x}_i-y_i = H_i\eta_i =H_i T_i T_i^\mathrm{T} \eta_i =H_{i\mathrm{d}} \eta_{i\mathrm{d}} $. Then, it follows that
		\begin{align}
			\dot{V} & \leq -\mu_\mathrm{d} \eta_{\mathrm{d}}^\mathrm{T} \eta_{\mathrm{d}}  + \eta_\mathrm{d}^\mathrm{T} A_\mathrm{r}^\mathrm{T} A_\mathrm{r} \eta_\mathrm{d} + \eta_\mathrm{u}^\mathrm{T} \eta_\mathrm{u}  + 2 \eta_\mathrm{u}^\mathrm{T} A_\mathrm{u} \eta_\mathrm{u}  \nonumber \\
			&\quad - \frac{c}{2}\eta_\mathrm{u}^\mathrm{T} U^\mathrm{T} (\mathcal{L}\otimes I_n) U \eta_\mathrm{u} + \frac{2-\beta}{1-\beta} \eta_\mathrm{d}^\mathrm{T} Q \eta_\mathrm{d}    \nonumber \\
			&\quad  + \frac{c}{2} \sum_{i\in\mathcal{V}} \Big(-\delta_i \rho_i  + \gamma_i \|H_{i\mathrm{d}} \eta_{i\mathrm{d}}\|^2 \Big)\nonumber \\
			&\leq -\mu_\mathrm{d} \eta_{\mathrm{d}}^\mathrm{T} \eta_{\mathrm{d}}  + \eta_\mathrm{d}^\mathrm{T} A_\mathrm{r}^\mathrm{T} A_\mathrm{r} \eta_\mathrm{d} + \eta_\mathrm{u}^\mathrm{T} \eta_\mathrm{u}  + 2 \eta_\mathrm{u}^\mathrm{T} A_\mathrm{u} \eta_\mathrm{u}  \nonumber \\
			&\quad - \frac{c}{2}\eta_\mathrm{u}^\mathrm{T} U^\mathrm{T} (\mathcal{L}\otimes I_n) U \eta_\mathrm{u} - \frac{c\underline{\delta}}{2} \sum_{i\in\mathcal{V}}  \rho_i     \nonumber \\
			&\quad +\frac{c\bar{\gamma}}{2} \eta_\mathrm{d}^\mathrm{T}H_\mathrm{d}^\mathrm{T} H_\mathrm{d}\eta_\mathrm{d}  + \frac{2-\beta}{1-\beta} \eta_\mathrm{d}^\mathrm{T} Q \eta_\mathrm{d},\label{ineq6 in proof 1}
		\end{align}
		where $\underline{\delta} = \mathrm{min}_{i\in\mathcal{V}} \{\delta_i\}$, $\bar{\gamma} = \mathrm{max}_{i\in\mathcal{V}} \{\gamma_i\}$, $Q = G^\mathrm{T} D^\mathrm{T} (\mathcal{L}\otimes I_n) D G$.

		Let $Q_\mathrm{M} = \max_{t-t_k^i\in[0,\bar{\tau}_i)} \|Q\|$, $0<\alpha<\frac{1}{\|P_\mathrm{d}\|}$, and choose $\mu_\mathrm{d}$ such that $\mu_\mathrm{d} \geq \frac{ \| A_\mathrm{r}^\mathrm{T} A_\mathrm{r}  \| + \frac{c\bar{\gamma}}{2}\|H_\mathrm{d}^\mathrm{T}H_\mathrm{d}\| + \frac{2-\beta}{1-\beta} Q_\mathrm{M}}{1-\alpha\|P_\mathrm{d}\|}$. Then it follows from (\ref{choice of c n}) and (\ref{ineq6 in proof 1}) that 
		\begin{align}
			\dot{V}  &\leq -\alpha \mu_\mathrm{d} \eta_{\mathrm{d}}^\mathrm{T} P_\mathrm{d} \eta_{\mathrm{d}} - \eta_\mathrm{u}^\mathrm{T} \eta_\mathrm{u}  - \frac{c\underline{\delta}}{2} \sum_{i\in\mathcal{V}} \rho_i \leq -\xi V,  \nonumber
		\end{align}
		where $\xi = \min\{1,\alpha,\underline{\delta}\}$. According to the Comparison Principle \cite[Lemma 3.4]{khalil2002nonlinear}, we have $V \leq V(0) \mathrm{e}^{-\xi t}$ for all $t\geq 0$, which implies that both $\eta_\mathrm{d}$ and $\eta_\mathrm{u}$ converge to zero exponentially. The proof is completed.
	\end{proof}

	\subsection{Inter-Event Time Analysis}
	In the following, we demonstrate that the event-triggering mechanism (\ref{ET condition n}) ensures positive MIETs. Define the inter-event time as $\tau_k^i \triangleq t_{k+1}^i - t_k^i$ and we give the following theorem.
	\begin{theorem}
		Under the conditions in Theorem \ref{theorem 3.1}, it holds that  $\inf_{k\in\mathbb{N}} \{ \tau_k^i \} > 0$ for all $i\in\mathcal{V}$.
	\end{theorem}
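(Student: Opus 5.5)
The plan is to lower-bound each inter-event time $\tau_k^i$ by a constant that does not depend on $k$ or on the initial conditions. Fix $i$ and $k$ and write $s=t-t_k^i$. If the event is triggered by the timeout, then $\tau_k^i=\bar\tau_i$. So it suffices to show that $f_i$ cannot reach $0$ before some uniform $\tau^*>0$, which gives $\tau_k^i\ge\min\{\bar\tau_i,\tau^*\}$. At $s=0$ we have $\tilde x_i=0$, hence $f_i(t_k^i)\le-\kappa_i\rho_i(t_k^i)<0$. The task is therefore to show that the positive term $4l_{ii}\|\tilde x_{i\mathrm u}\|^2$ stays below $\beta\sum_j a_{ij}\|\bar x_i-\bar x_j\|^2+\kappa_i\rho_i$ on a uniform interval. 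All estimates will be \emph{relative} to this denominator, since every signal decays exponentially by Theorem~\ref{theorem 3.1}.

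\emph{Dynamics of the sampling error.} Differentiating $\tilde x_i=\bar x_i-\hat x_i$ with (\ref{observer n}) gives
\begin{align}
\dot{\tilde x}_i = A\tilde x_i + L_iH_i\eta_i + cM_i\sum_{j\in\mathcal N_i}a_{ij}(\bar x_i-\bar x_j). \nonumber
\end{align}
Projecting with $T_i$, and using $U_i^\mathrm T L_i=0$ and $D_i^\mathrm T M_i=0$, we obtain:
\begin{align}
\dot{\tilde x}_{i\mathrm d} &= A_{i\mathrm d}\tilde x_{i\mathrm d}+L_{i\mathrm d}H_{i\mathrm d}\eta_{i\mathrm d}, \nonumber\\
\dot{\tilde x}_{i\mathrm u} &= A_{i\mathrm r}\tilde x_{i\mathrm d}+A_{i\mathrm u}\tilde x_{i\mathrm u}+cU_i^\mathrm T\sum_{j\in\mathcal N_i}a_{ij}(\bar x_i-\bar x_j). \nonumber
\end{align}
The key observation for the detectable part is that $\tilde x_{i\mathrm d}(t_k^i)=0$. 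Variation of constants then gives
\begin{align}
\tilde x_{i\mathrm d}(t)=\int_0^s \mathrm e^{A_{i\mathrm d}(s-r)}L_{i\mathrm d}H_{i\mathrm d}\eta_{i\mathrm d}(t_k^i+r)\,\mathrm dr, \nonumber
\end{align}
so $\tilde x_{i\mathrm d}$ is driven only by the output error $H_i\hat x_i-y_i=H_{i\mathrm d}\eta_{i\mathrm d}$. This is exactly the quantity fed into $\rho_i$ through $\gamma_i$. With $s\le\bar\tau_i$, Cauchy--Schwarz gives
\begin{align}
\|\tilde x_{i\mathrm d}(t)\|^2\le C_1 s\int_0^s\|H_{i\mathrm d}\eta_{i\mathrm d}\|^2\mathrm dr. \nonumber
\end{align}
Similarly, $\|U_i^\mathrm T\sum_j a_{ij}(\bar x_i-\bar x_j)\|^2\le l_{ii}\sum_j a_{ij}\|\bar x_i-\bar x_j\|^2$. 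Applying Gronwall and Cauchy--Schwarz to the $\tilde x_{i\mathrm u}$ equation, with $\tilde x_{i\mathrm u}(t_k^i)=0$, then yields
\begin{align}
4l_{ii}\|\tilde x_{i\mathrm u}(t)\|^2\le C_2\, s\int_0^s\Big(\sum_{j}a_{ij}\|\bar x_i-\bar x_j\|^2+\|H_{i\mathrm d}\eta_{i\mathrm d}\|^2\Big)\mathrm dr \triangleq C_2\,s\,\Phi_i(s). \nonumber
\end{align}
Here $C_1,C_2$ depend only on $A,L_{i\mathrm d},c,l_{ii},\bar\tau_i$. The function $\Phi_i$ is the new comparison function.

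\emph{Lower bound on the denominator.} This is the step I expect to be the main obstacle. The numerator is controlled by an \emph{integral} of the consensus and output errors, whereas the denominator contains the consensus error only at the current time. The plan is to recover the integral from $\rho_i$. From (\ref{dynamic variable n}),
\begin{align}
\dot\rho_i\ge-\delta_i\rho_i+\beta\sum_j a_{ij}\|\bar x_i-\bar x_j\|^2+\gamma_i\|H_{i\mathrm d}\eta_{i\mathrm d}\|^2-4l_{ii}\|\tilde x_{i\mathrm u}\|^2. \nonumber
\end{align}
Integrating over $[t_k^i,t]$ gives
\begin{align}
\rho_i(t)\ge \mathrm e^{-\delta_i s}\Big(\rho_i(t_k^i)+\min\{\beta,\gamma_i\}\Phi_i(s)-\int_0^s4l_{ii}\|\tilde x_{i\mathrm u}\|^2\mathrm dr\Big). \nonumber
\end{align}
By the previous bound and the monotonicity of $\Phi_i$, the last integral is at most $C_2 s^2\Phi_i(s)$. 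Hence for $s$ small (uniformly) we get $\kappa_i\rho_i(t)\ge c_0\,\Phi_i(s)$ with $c_0>0$ independent of $k$.

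\emph{Conclusion.} Combining the two bounds gives $4l_{ii}\|\tilde x_{i\mathrm u}\|^2\le (C_2 s/c_0)\,\kappa_i\rho_i$, with a strict inequality whenever $\rho_i>0$. Therefore $f_i<0$ as long as $s<\tau^*\triangleq\min\{\bar\tau_i,\,s_0,\,c_0/C_2\}$, where $s_0$ is the smallness threshold used above. It follows that $\tau_k^i\ge\tau^*>0$ for all $k$, and hence $\inf_k\tau_k^i>0$. The delicate points to check are two. First, the constants must be uniform in $k$, which holds because $s\le\bar\tau_i$ bounds all matrix exponentials. Second, the sampling events of neighbors $j$ inside $[t_k^i,t)$ must not break the argument; they do not, because $\bar x_j$ enters only through the integrand of $\Phi_i$, never through initial values.
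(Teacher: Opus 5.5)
Your argument is correct, and it takes a genuinely different route from the paper. The paper never integrates the error dynamics. It passes to $\psi_i=\|\tilde x_i\|^2/\rho_i$ via $\bar\psi_i\le\frac{4l_{ii}}{\kappa_i}\psi_i$ and differentiates this quotient. It then applies Young's inequality pointwise in time, so that the driving terms $L_iH_i\eta_i$ and $cM_i\sum_j a_{ij}(\bar x_i-\bar x_j)$ in $\dot{\tilde x}_i$ are absorbed by the negative terms that the $\gamma_i$- and $\beta$-injections in $\dot\rho_i$ create inside $-\|\tilde x_i\|^2\dot\rho_i/\rho_i^2$. This yields $\dot\psi_i\le g_i(\psi_i)$ with $g_i$ quadratic, and the explicit bound $\underline\tau_i=\int_0^{\kappa_i/(4l_{ii})}\mathrm{d}s/g_i(s)$.

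You exploit the same two injections, but at the integral level. Variation of constants from $\tilde x_i(t_k^i)=0$ bounds the numerator by $C_2s\,\Phi_i(s)$. Integrating the $\rho_i$ equation then shows that $\rho_i$ accumulates a fixed fraction of the same integral $\Phi_i(s)$, so the triggering ratio is $O(s)$.

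\emph{What your route buys.} It avoids differentiating a ratio and shows transparently why the injected terms are needed. Your final bound $\min\{\bar\tau_i,\tau^*\}$ is also more careful than the paper's $\inf_k\tau_k^i\ge\underline\tau_i$, which tacitly assumes $\underline\tau_i\le\bar\tau_i$.

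\emph{What the paper's route buys.} Your constants are implicit and cruder (Cauchy--Schwarz, sup bounds on matrix exponentials, the extra threshold $s_0$), and as written they depend on $\bar\tau_i$. The paper's $\underline\tau_i$ is explicit and computable; it is the quantity plotted in the numerical section. It is also independent of the timeout, which is what supports the claim that positive MIETs are intrinsic to the dynamic mechanism. You can remove your $\bar\tau_i$-dependence by taking all sup bounds over $s\le\min\{1,\bar\tau_i\}$.

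One small slip to fix: in your integrated bound on $\rho_i$, the factor $\mathrm{e}^{-\delta_i s}$ cannot be pulled out of the negative term $-\int_0^s 4l_{ii}\|\tilde x_{i\mathrm u}\|^2$. The inequality $\mathrm{e}^{-\delta_i(s-r)}\ge\mathrm{e}^{-\delta_i s}$ points the wrong way there. Bound that kernel by $1$ instead; this is harmless because the term is already $O(s^2)\,\Phi_i(s)$.
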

	
	\begin{proof}
		Motivated by \cite{zhan2025dynamic}, we define the following two comparison functions:
		\begin{align}
			\bar{\psi}_i = \frac{4l_{ii} \| \tilde{x}_{i\mathrm{u}} \|^2  }{\beta \sum_{j \in \mathcal{N}_i} a_{ij} \| \bar{x}_i -  \bar{x}_j \|^2 + \kappa_i \rho_i}, \quad
			\psi_i = \frac{ \| \tilde{x}_{i} \|^2  }{ \rho_i} . \nonumber
		\end{align}
		Note that at the event instant $t_k^i$, both $\bar{\psi}_i$ and $\psi_i$ equal zero. According to the node-based event-triggering mechanism (\ref{ET condition n})-(\ref{dynamic variable n}), the inter-event time $\tau_k^i$ is the time that it takes for the function $\bar{\psi}_i$ to evolve from $0$ to $1$ for the first time after $t_k^i$. Moreover, it is easy to obtain that
		\begin{align}
			\bar{\psi}_i \leq \frac{4l_{ii}\| \tilde{x}_{i\mathrm{u}} \|^2 }{\kappa_i \rho_i}. \label{relation between bar psi and psi}
		\end{align}
		Since matrices $U_i$ and $D_i$ are the column blocks of the orthonormal matrix $T_i$, we have
		\begin{align}
			\| \tilde{x}_i  \|^2 &= \| U_i \tilde{x}_{i\mathrm{u}} + D_i \tilde{x}_{i\mathrm{d}}   \|^2 = \| \tilde{x}_{i\mathrm{u}} \|^2 + \Vert \tilde{x}_{i\mathrm{d}} \Vert^2, \label{tilde x i square}
		\end{align}
		which implies that $\| \tilde{x}_{i\mathrm{u}} \|^2 \leq \| \tilde{x}_i  \|^2$. Therefore, it follows from (\ref{relation between bar psi and psi}) that 
		\begin{align}
			\bar{\psi}_i  \leq \frac{4l_{ii}\| \tilde{x}_{i} \|^2 }{\kappa_i \rho_i}  = \frac{4l_{ii}}{\kappa_i} \psi_i, \label{bar psi i ineq}
		\end{align}
		which implies that the inter-event time $\tau_k^i$ is lower bounded by the time required for $\psi_i$ to transfer from $0$ to $\frac{\kappa_i}{4 l_{ii}}$.
		
		Taking the time derivative of $\psi_i$ along the solution of $\tilde{x}_i$ and $\rho_i$ yields
		\begin{align}
			\dot{\psi}_i &= \frac{2 \tilde{x}_i^\mathrm{T} \dot{\tilde{x}}_i}{\rho_i} - \frac{\| \tilde{x}_i \|^2}{\rho_i^2} \dot{\rho}_i \nonumber \\
			&= \frac{2 \tilde{x}_i^\mathrm{T} }{\rho_i} \Big[A\tilde{x}_i + L_i H_i \eta_i + cM_i \sum_{j\in\mathcal{N}_i} a_{ij}(\bar{x}_i - \bar{x}_j)\Big] \nonumber \\
			&\quad - \frac{\| \tilde{x}_i \|^2}{\rho_i^2} \Big(-\delta_i \rho_i - 4l_{ii} \| \tilde{x}_{i\mathrm{u}} \|^2     \nonumber \\
			&\quad +  \beta \sum_{j \in \mathcal{N}_i} a_{ij} \| \bar{x}_i -  \bar{x}_j \|^2  + \gamma_i \|H_i \hat{x}_i-y_i\|^2 \Big) \nonumber \\
			&\leq (\|A + A^\mathrm{T}\| + \delta_i) \frac{ \| \tilde{x}_{i} \|^2  }{ \rho_i} + 4l_{ii} \frac{ \| \tilde{x}_{i} \|^4  }{ \rho_i^2}   \nonumber \\
			&\quad + \frac{2\tilde{x}_i^T L_i H_i \eta_i}{\rho_i} - \frac{\gamma_i \| \tilde{x}_i \|^2}{\rho_i^2} \eta_i^\mathrm{T}H_i^\mathrm{T}H_i \eta_i  \nonumber \\
			&\quad + \frac{2c\tilde{x}_i^\mathrm{T} M_i \sum_{j\in\mathcal{N}_i}a_{ij}(\bar{x}_i - \bar{x}_j)}{\rho_i} \nonumber \\
			&\quad - \frac{\beta \| \tilde{x}_{i} \|^2  }{ \rho_i^2} \sum_{j\in\mathcal{N}_i} a_{ij}\| \bar{x}_i - \bar{x}_j \|^2.  \label{dot psi_i 1}
		\end{align}
		By Young's inequality, we have
		\begin{align}
			&\frac{2c\tilde{x}_i^\mathrm{T} M_i (\bar{x}_i - \bar{x}_j)}{\rho_i} - \frac{\beta \| \tilde{x}_{i} \|^2  }{ \rho_i^2} \| \bar{x}_i - \bar{x}_j \|^2 \leq \frac{\| M_i^\mathrm{T} M_i \|c^2}{\beta} \label{eq1 in proof 2}
		\end{align}
		and
		\begin{align}
			\frac{2\tilde{x}_i^\mathrm{T} L_i H_i \eta_i}{\rho_i} - \frac{\gamma_i \| \tilde{x}_i \|^2}{\rho_i^2} \eta_i^\mathrm{T}H_i^\mathrm{T}H_i \eta_i \leq \frac{\|L_i^\mathrm{T} L_i \|}{\gamma_i}. \label{dot psi_i component}
		\end{align}
		Substituting (\ref{eq1 in proof 2}) and (\ref{dot psi_i component}) into (\ref{dot psi_i 1}) yields
		\begin{align}
			\dot{\psi}_i &\leq  \frac{\| M_i^T M_i \| l_{ii} c^2}{\beta} + \frac{\|L_i^\mathrm{T} L_i \|}{\gamma_i}  + (\|A + A^\mathrm{T}\| + \delta_i) \psi_i \nonumber \\
			&\quad + 4l_{ii} \psi_i^2. \nonumber
		\end{align}
		Then by the Comparison Principle \cite[Lemma 3.4]{khalil2002nonlinear}, there holds $\psi_i(t) \leq \phi_i(t)$ for all $t\geq 0$, where $\phi_i$ is the solution to the differential equation
		\begin{align}
			\dot{\phi}_i = g_i(\phi_i), \quad t\in[t^i_k,t^i_{k+1}),  \nonumber
		\end{align}
		where $	 \phi_i(t_k^{i+}) = \psi_i(t_k^{i+}) = 0$ and $g_i(\phi_i)\triangleq \frac{\| M_i^\mathrm{T} M_i \|l_{ii} c^2}{\beta} + \frac{\|L_i^\mathrm{T} L_i \|}{\gamma_i} + (\|A + A^\mathrm{T}\| + \delta_i) \phi_i + 4l_{ii} \phi_i^2 $.
		
		Note that the time for $\phi_i$ to transfer from $0$ to $\frac{\kappa_i}{4l_{ii}}$ is 
		\begin{align}
			\underline{\tau}_i = \int_{0}^{\frac{\kappa_i}{4l_{ii}}} \frac{1}{g_i(s)} \mathrm{d} s. \label{theoretical lower bound of MIET n}
		\end{align}
		Obviously, $\underline{\tau}_i$ is a strictly positive constant. Given that $\psi_i(t) \leq \phi_i(t)$ and the inequality (\ref{bar psi i ineq}) holds, it follows that $\underline{\tau}_i$ is the lower bound of the time required by $\bar{\psi}_i$ to evolve from $0$ to $1$. Therefore, the inter-event times $\tau_k^i$ satisfy $\inf_{k\in\mathbb{N}}\{\tau^i_k\} \geq \underline{\tau}_i >0$ for all $i\in\mathcal{V}$, which completes the proof.
	\end{proof}
	
	\begin{remark}
		In \cite{sun2020event,liu2021resilient,su2024adaptive,long2025adaptive}, the event-triggered distributed observers are only proven to be able to exclude Zeno behavior of the resulting event times. In contrast, we prove that our method can ensure strictly positive MIETs, which lays a solid foundation for the practicality of the designed event-triggered distributed observer.
	\end{remark}
	
	\begin{remark}
		In \cite{zhu2025hybrid}, the event-triggered distributed observer is also proven to have strictly positive MIETs. The primary distinctions between our work and  \cite{zhu2025hybrid} lie in the following aspects: i) In \cite{zhu2025hybrid}, the positive MIETs are ensured by time-regularization. Specifically, the event-triggering mechanism imposes a constraint that the next triggering instant can only occur after a fixed time interval to ensure positive inter-event intervals, which, however, is technically different from our work where the existence of a positive lower bound of MIETs is proved to be an inherent property of the dynamic event-triggering mechanism (\ref{ET condition n}); ii) Only a node-based event-triggering mechanism is considered in \cite{zhu2025hybrid}. In contrast, we construct both node-based and edge-based event-triggering mechanisms (the edge-based case is presented in the next section); iii) In addition, numerical results (see Table \ref{comparative study}) show that the guaranteed MIET level provided by our method is much larger than that by \cite{zhu2025hybrid}.
	\end{remark}
	
	\section{Edge-based Event-Triggered Distributed Observer }\label{Edge-based Event-Triggered Distributed Observer}
	
	\subsection{Observer Design}
	
	For the edge-based case, we design a dynamic event-triggered distributed observer of the following form:
	\begin{align}
		\dot{\hat{x}}_i (t) &= A \hat{x}_i(t) + L_i (y_i(t) - H_i \hat{x}_i(t)) \nonumber \\
		&\quad - \check{c}M_i \sum_{j\in \mathcal{N}_i} a_{ij} (\bar{x}_{ij}(t)-\bar{x}_{ji}(t)), \label{observer e}
	\end{align}
	where $\bar{x}_{ij}(t) = \mathrm{e}^{A(t-t_k^{ij})} \hat{x}_i(t_k^{ij})$, $t \in \left[t_k^{ij},t_{k+1}^{ij}\right)$, with $\{t_k^{ij}\}_{k\in\mathbb{N}}$ being the sampling instant sequence for edge $(i,j)$. Moreover, the gain $\check{c}$ is any constant such that
	\begin{align}
		\check{c}\geq \frac{2 +\| A_\mathrm{u}+A_\mathrm{u}^\mathrm{T} \|}{(1-2\epsilon)\lambda_{\mathrm{min}}(U^\mathrm{T}(\mathcal{L}\otimes I_n) U)}, \label{choice of c e}
	\end{align}	
	where $\epsilon$ is any constant belonging to  $(0,\frac{1}{2})$ and similarly, the graph $\mathcal{G}$ is assumed to connected so that the right hand side of (\ref{choice of c e}) is a finite constant by Lemma \ref{lemma 2.1}. The design of $L_i$ and $M_i$ is the same as that in (\ref{gain matrix}).

	We now present the edge-based event-triggering mechanism. Define $\tilde{x}_{ij} = \bar{x}_{ij} - \hat{x}_i$ and perform the detectability decomposition on $\tilde{x}_{ij}$ using $T_i$ as $\tilde{x}_{ij} = T_i \begin{bmatrix}
		\tilde{x}_{ij\mathrm{d}} \\
		\tilde{x}_{ij\mathrm{u}}
	\end{bmatrix}$. The sampling instant sequence $\{t_k^{ij}\}_{k\in\mathbb{N}}$ is generated by the following dynamic event-triggering mechanism:
	\begin{align}
		&t_{k+1}^{ij} = \inf \{t>t_k^{ij}|f_{ij}(t) > 0 \lor t-t_k^{ij} \geq \bar{\tau}_{ij}\}, \quad t_0^{ij} = 0, \label{ET condition e}
	\end{align}
	where
	\begin{align}
		f_{ij}(t) &= 4a_{ij} \| \tilde{x}_{ij\mathrm{u}} \|^2  - \frac{\check{\beta}}{N_i} \sum_{k \in \mathcal{N}_i} a_{ik} \| \bar{x}_{ik} -  \bar{x}_{ki} \|^2 \nonumber \\
		&\quad - \kappa_{ij} \rho_{ij}(t), \label{ET function e}
	\end{align}
	and $\rho_{ij}(\cdot) : \left[0,\infty\right) \rightarrow \left(0,\infty\right)$ evolves according to
	\begin{align}
		\dot{\rho}_{ij} &= -\delta_{ij} \rho_{ij} - 4a_{ij} \| \tilde{x}_{ij\mathrm{u}} \|^2  +  \frac{\check{\beta}}{N_i} \sum_{k \in \mathcal{N}_i} a_{ik} \| \bar{x}_{ik} -  \bar{x}_{ki} \|^2 \nonumber \\
		&\quad      + \gamma_{ij} \| H_i\hat{x}_i-y_i \|^2. \label{dynamic variable e} 
	\end{align}
	In (\ref{ET condition e})--(\ref{dynamic variable e}), $\bar{\tau}_{ij}$, $\delta_{ij}$, $\kappa_{ij}$, $\gamma_{ij}$ and $\rho_{ij}(0)$ are any positive constants, and $0<\check{\beta}<1$. Similar to the dynamic variables $\rho_i(t)$ in (\ref{dynamic variable n}), one can easily obtain that $\rho_{ij}(t)>0$ for all $t\geq0$.

	\subsection{Convergence Analysis}
	In this section, we prove that the distributed observer (\ref{observer e}) together with the edge-based dynamic event-triggering mechanism (\ref{ET condition e}) guarantees exponential convergence of the estimation error. 
	
	\begin{theorem} \label{theorem 4.1}
		Consider the system (\ref{LTI system}) and (\ref{y}) over the communication network $\mathcal{G}$. Suppose that $\mathcal{G}$ is undirected and connected, and the pair $(A,H)$ is detectable. Then the observer (\ref{observer e})--(\ref{choice of c e}) with the edge-based dynamic event-triggering mechanism (\ref{ET condition e}) guarantees that the estimation error $\eta_i$ converges to zero exponentially.
	\end{theorem}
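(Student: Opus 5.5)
The plan is to follow the proof of Theorem \ref{theorem 3.1} step by step, with edge variables in place of node variables. The new difficulty is that the coupling term now contains $\bar{x}_{ij}-\bar{x}_{ji}$, and the two endpoints of an edge are sampled asynchronously.

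\textbf{Error dynamics and edge-wise splitting.} Write $\bar{\eta}_{ij}=\bar{x}_{ij}-x$ and $\tilde{\eta}_{ij}=\bar{\eta}_{ij}-\eta_i=\tilde{x}_{ij}$. Decompose each quantity with $T_i$, as in (\ref{bar eta i n}). The detectable part again satisfies (\ref{dot eta id n}). The undetectable part satisfies
\begin{align}
\dot{\eta}_{i\mathrm{u}} = A_{i\mathrm{r}}\eta_{i\mathrm{d}}+A_{i\mathrm{u}}\eta_{i\mathrm{u}}-\check{c}U_i^\mathrm{T}\sum_{j\in\mathcal{N}_i}a_{ij}(\bar{\eta}_{ij}-\bar{\eta}_{ji}). \nonumber
\end{align}
As in (\ref{relation between eta and eta bar n}), $\bar{\eta}_{ij\mathrm{d}}=G_{ij}\eta_{i\mathrm{d}}$ with $G_{ij}$ bounded, because $t-t_k^{ij}<\bar{\tau}_{ij}$.

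The key identity is
\begin{align}
\bar{\eta}_{ij}-\bar{\eta}_{ji} = (\eta_i-\eta_j)+(\tilde{\eta}_{ij}-\tilde{\eta}_{ji}). \nonumber
\end{align}
Substituting the decompositions $\eta_i=D_i\eta_{i\mathrm{d}}+U_i\eta_{i\mathrm{u}}$ and $\tilde{\eta}_{ij}=D_i\tilde{\eta}_{ij\mathrm{d}}+U_i\tilde{\eta}_{ij\mathrm{u}}$ gives:
\begin{itemize}
\item a consensus term $-2\check{c}\,\eta_\mathrm{u}^\mathrm{T}U^\mathrm{T}(\mathcal{L}\otimes I_n)U\eta_\mathrm{u}$;
\item cross terms involving $\eta_\mathrm{d}$ and the $\tilde{\eta}_{\cdot\mathrm{d}}$, which can be bounded by quadratic forms in $\eta_\mathrm{d}$ since $\tilde{\eta}_{ij\mathrm{d}}=(G_{ij}-I)\eta_{i\mathrm{d}}$;
\item edge terms $-2\check{c}\sum_i\sum_{j}a_{ij}\eta_{i\mathrm{u}}^\mathrm{T}U_i^\mathrm{T}(U_i\tilde{\eta}_{ij\mathrm{u}}-U_j\tilde{\eta}_{ji\mathrm{u}})$.
\end{itemize}
Using $a_{ij}=a_{ji}$, I would rewrite the edge terms as a sum over edges of $(U_i\eta_{i\mathrm{u}}-U_j\eta_{j\mathrm{u}})^\mathrm{T}$ times the edge errors. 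Young's inequality with weight $\epsilon$ then bounds them by
\begin{align}
2\epsilon\check{c}\,\eta_\mathrm{u}^\mathrm{T}U^\mathrm{T}(\mathcal{L}\otimes I_n)U\eta_\mathrm{u}+\frac{\check{c}}{\epsilon}\cdot\frac12\sum_i\sum_j a_{ij}\|U_i\tilde{\eta}_{ij\mathrm{u}}-U_j\tilde{\eta}_{ji\mathrm{u}}\|^2 . \nonumber
\end{align}
The last sum is at most $2\sum_i\sum_j a_{ij}\|\tilde{x}_{ij\mathrm{u}}\|^2$. This explains the factor $1-2\epsilon$ in (\ref{choice of c e}): the remaining consensus term $-2(1-2\epsilon)\check{c}\,\eta_\mathrm{u}^\mathrm{T}U^\mathrm{T}(\mathcal{L}\otimes I_n)U\eta_\mathrm{u}$ dominates $\eta_\mathrm{u}^\mathrm{T}(I+A_\mathrm{u}+A_\mathrm{u}^\mathrm{T})\eta_\mathrm{u}$ and leaves $-\eta_\mathrm{u}^\mathrm{T}\eta_\mathrm{u}$.

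\textbf{Lyapunov function and absorption by $\rho_{ij}$.} I would take
\begin{align}
V=\mu_\mathrm{d}\eta_\mathrm{d}^\mathrm{T}P_\mathrm{d}\eta_\mathrm{d}+\eta_\mathrm{u}^\mathrm{T}\eta_\mathrm{u}+\theta\sum_i\sum_{j\in\mathcal{N}_i}\rho_{ij}, \nonumber
\end{align}
with $\theta$ proportional to $\check{c}$ (and possibly to $1/\epsilon$), chosen so that $\theta\sum\dot{\rho}_{ij}$ produces $-4\theta a_{ij}\|\tilde{x}_{ij\mathrm{u}}\|^2$ cancelling the edge-error term. Summing over $j\in\mathcal{N}_i$, the $\check{\beta}/N_i$ weighting gives $+\theta\check{\beta}\sum_i\sum_k a_{ik}\|\bar{x}_{ik}-\bar{x}_{ki}\|^2$. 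This positive term must also be absorbed.

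The $\rho$ terms are where I expect the main obstacle. The absorption should come from a lower bound analogous to (\ref{eq3 in proof 1}): $-2\check{c}\,\eta_\mathrm{u}^\mathrm{T}(\cdots)$ is split so that a fraction $\check{\beta}$ of $\sum a_{ik}\|\bar{\eta}_{ik}-\bar{\eta}_{ki}\|^2$ appears with a negative sign, up to detectable-part remainders bounded via $G_{ij}$. Concretely, I would first try to bound
\begin{align}
\frac12\sum_i\sum_k a_{ik}\|\bar{\eta}_{ik}-\bar{\eta}_{ki}\|^2\le 2\,\eta^\mathrm{T}(\mathcal{L}\otimes I_n)\eta+2\sum_i\sum_k a_{ik}\|\tilde{x}_{ik}\|^2 . \nonumber
\end{align}
This reduces everything to $\eta_\mathrm{u}$-consensus terms, $\eta_\mathrm{d}$ terms and $\|\tilde{x}_{ik\mathrm{u}}\|^2$ terms. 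The triggering rule keeps $4a_{ij}\|\tilde{x}_{ij\mathrm{u}}\|^2\le\frac{\check{\beta}}{N_i}\sum_k a_{ik}\|\bar{x}_{ik}-\bar{x}_{ki}\|^2+\kappa_{ij}\rho_{ij}$, which I would use to close the loop, provided $\check{\beta}<1$ leaves enough margin.

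\textbf{Closing the argument.} The term $\gamma_{ij}\|H_{i\mathrm{d}}\eta_{i\mathrm{d}}\|^2$ and all $G$-dependent detectable remainders are absorbed by choosing $\mu_\mathrm{d}$ large, exactly as in the proof of Theorem \ref{theorem 3.1}. The result is $\dot{V}\le-\xi V$ with $\xi=\min\{1,\alpha,\min\delta_{ij}\}$. The comparison principle then gives exponential decay of $\eta_\mathrm{d}$ and $\eta_\mathrm{u}$, hence of $\eta_i$.
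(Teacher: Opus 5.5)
There is a genuine gap, and it sits exactly at the step you flag as the main obstacle. Your error dynamics, the antisymmetrization over edges, the bound $\bar{\eta}_{ij\mathrm{d}}=G_{ij}\eta_{i\mathrm{d}}$ and the form of the Lyapunov function all match the paper. The failure is in absorbing the positive $\check{\beta}$-term of $\dot{\rho}_{ij}$.

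You bound the cross term $\check{c}\sum_{i}\sum_{j\in\mathcal{N}_i}a_{ij}(U_i\eta_{i\mathrm{u}}-U_j\eta_{j\mathrm{u}})^\mathrm{T}(U_i\tilde{\eta}_{ij\mathrm{u}}-U_j\tilde{\eta}_{ji\mathrm{u}})$ separately, using Young's inequality with weight $\epsilon$. The sampling error then enters as $\frac{2\check{c}}{\epsilon}\sum_i\sum_j a_{ij}\|\tilde{x}_{ij\mathrm{u}}\|^2$, which forces $\theta\geq\check{c}/(2\epsilon)$. Consequently the $\check{\beta}$-part of $\theta\sum\dot{\rho}_{ij}$ is at least $\frac{\check{c}\check{\beta}}{2\epsilon}\sum_i\sum_k a_{ik}\|\bar{x}_{ik}-\bar{x}_{ki}\|^2$. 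Your \emph{upper} bound $\|\bar{\eta}_{ik}-\bar{\eta}_{ki}\|^2\leq 2\|\eta_i-\eta_k\|^2+2\|\tilde{\eta}_{ik}-\tilde{\eta}_{ki}\|^2$ turns this into a positive term $\frac{2\check{c}\check{\beta}}{\epsilon}\eta_\mathrm{u}^\mathrm{T}U^\mathrm{T}(\mathcal{L}\otimes I_n)U\eta_\mathrm{u}$. It also produces further $\|\tilde{x}_{ik\mathrm{u}}\|^2$ terms, which push the required $\theta$ up again. Every term scales with $\check{c}$, so enlarging $\check{c}$ does not help. Against your remaining $-2(1-2\epsilon)\check{c}\,\eta_\mathrm{u}^\mathrm{T}U^\mathrm{T}(\mathcal{L}\otimes I_n)U\eta_\mathrm{u}$, you would need roughly $\check{\beta}<\epsilon(1-2\epsilon)\leq 1/8$. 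Even optimizing both Young weights, this ``bound each piece separately'' scheme cannot go beyond $\check{\beta}<1/4$, and the gain condition would no longer be (\ref{choice of c e}). The theorem, however, claims the result for every $\check{\beta}\in(0,1)$ and every $\epsilon\in(0,\frac12)$.

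Falling back on $f_{ij}\leq 0$ does not rescue the argument. That inequality bounds $\|\tilde{x}_{ij\mathrm{u}}\|^2$ by the same $\check{\beta}$-sum, so it is circular. It also adds $+\kappa_{ij}\rho_{ij}$ terms with weight of order $\check{c}/\epsilon$. These can be absorbed by $-\theta\delta_{ij}\rho_{ij}$ only under a relation between $\kappa_{ij}$ and $\delta_{ij}$ that the theorem does not impose; Example 2 uses $\kappa_{ij}=100$ and $\delta_{ij}=1$.

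The missing idea is to keep the $U$-part of the coupling intact and apply the polarization identity $2p^\mathrm{T}q=\|p\|^2+\|q\|^2-\|p-q\|^2$, the edge analogue of (\ref{eq1 in proof 1}), as in (\ref{eq22}). Take $p=U_i\eta_{i\mathrm{u}}-U_j\eta_{j\mathrm{u}}$ and $q=U_i\bar{\eta}_{ij\mathrm{u}}-U_j\bar{\eta}_{ji\mathrm{u}}$, so that $p-q=-(U_i\tilde{\eta}_{ij\mathrm{u}}-U_j\tilde{\eta}_{ji\mathrm{u}})$. All sums below run over $i\in\mathcal{V}$, $j\in\mathcal{N}_i$. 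The identity gives three pieces:
$-\frac{\check{c}}{2}\sum\sum a_{ij}\|p\|^2$, then $-\frac{\check{c}}{2}\sum\sum a_{ij}\|q\|^2$, then $+\frac{\check{c}}{2}\sum\sum a_{ij}\|U_i\tilde{\eta}_{ij\mathrm{u}}-U_j\tilde{\eta}_{ji\mathrm{u}}\|^2$.
You thus get a genuinely negative term in the sampled differences, and the sampling error carries coefficient $\check{c}/2$, independent of $\epsilon$. Young's inequality with weight $\epsilon$ is used only on the $D$-cross terms, whose cost lands on $\eta_\mathrm{d}$ and is absorbed by $\check{\mu}_\mathrm{d}$. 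A $(1-\check{\beta})$-Young step as in (\ref{eq44}) then converts $-\frac{\check{c}}{2}\sum\sum a_{ij}\|q\|^2$ into $-\frac{\check{c}\check{\beta}}{2}\sum\sum a_{ij}\|\bar{x}_{ij}-\bar{x}_{ji}\|^2$ plus $\eta_\mathrm{d}$ remainders.

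With $\theta=\check{c}/2$, the terms $-4a_{ij}\|\tilde{x}_{ij\mathrm{u}}\|^2$ and $+\frac{\check{\beta}}{N_i}\sum_k a_{ik}\|\bar{x}_{ik}-\bar{x}_{ki}\|^2$ in $\dot{\rho}_{ij}$ cancel these exactly; summing over $j\in\mathcal{N}_i$ removes the factor $1/N_i$. This works for every $\check{\beta}\in(0,1)$ and all positive $\kappa_{ij},\delta_{ij}$. The triggering rule is needed only to guarantee $\rho_{ij}>0$ and $t-t_k^{ij}<\bar{\tau}_{ij}$ (boundedness of $G_{ij}$). What remains is $-\check{c}(1-2\epsilon)\eta_\mathrm{u}^\mathrm{T}U^\mathrm{T}(\mathcal{L}\otimes I_n)U\eta_\mathrm{u}$, which is precisely what (\ref{choice of c e}) is calibrated against.
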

	
	\begin{proof}
		Similar to the derivations in (\ref{dot eta i n})--(\ref{dot eta iu n}), the dynamics of $\eta_{i\mathrm{d}}$ and $\eta_{i\mathrm{u}}$ can be obtained:
		\begin{align}
			\dot{\eta}_{i\mathrm{d}} &= (A_{i\mathrm{d}} - L_{i\mathrm{d}} H_{i\mathrm{d}} ) \eta_{i\mathrm{d}}, \label{dot eta id e} \\
			\dot{\eta}_{i\mathrm{u}} &= A_{i\mathrm{r}} \eta_{i\mathrm{d}} + A_{i\mathrm{u}} \eta_{i\mathrm{u}} - \check{c}U_i^\mathrm{T} \sum_{j \in \mathcal{N}_i} a_{ij}(\bar{\eta}_{ij} - \bar{\eta}_{ji}), \label{dot eta iu e}
		\end{align}
		where 
		\begin{align}
			\bar{\eta}_{ij} =  T_i\begin{bmatrix}
				\bar{\eta}_{ij\mathrm{d}} \\
				\bar{\eta}_{ij\mathrm{u}}
			\end{bmatrix}. \label{bar eta ij}
		\end{align} 
		
		Let $\tilde{\eta}_{ij} = \bar{\eta}_{ij} - \eta_i = T_i \begin{bmatrix}
			\tilde{\eta}_{ij\mathrm{d}} \\
			\tilde{\eta}_{ij\mathrm{u}}
		\end{bmatrix}$. It is easy to obtain that 
		\begin{equation}
			\tilde{\eta}_{ij}=\tilde{x}_{ij},\,\,\tilde{\eta}_{ij\mathrm{d}}=\tilde{x}_{ij\mathrm{d}},\,\,\tilde{\eta}_{ij\mathrm{u}}=\tilde{x}_{ij\mathrm{u}}. \nonumber
		\end{equation}
		
		
		We now analyze the stability of the error dynamics (\ref{dot eta id e}) and (\ref{dot eta iu e}). Choose a candidate Lyapunov function as 
		\begin{align}
			W =  \check{\mu}_\mathrm{d} \eta_{\mathrm{d}}^\mathrm{T} P_\mathrm{d} \eta_\mathrm{d} +   \eta_{\mathrm{u}}^\mathrm{T} \eta_{\mathrm{u}} + \frac{\check{c}}{2} \sum_{i\in\mathcal{V}} \sum_{j\in\mathcal{N}_i} \rho_{ij}, \label{Lyapunov function e}
		\end{align} 
		where $\eta_{\mathrm{d}}$ and $\eta_{\mathrm{u}}$ are defined as in (\ref{concatenated variables}), $P_{\mathrm{d}}$ is the positive-definite matrix satisfying (\ref{Riccati equation}), and  $\check{\mu}_\mathrm{d}$ is a positive constant to be determined later. 
		
		Along the solution to (\ref{dot eta id e}) and (\ref{dot eta iu e}) and by (\ref{Riccati equation}), the time derivative of $W$ is given by
		\begin{align}
			\dot{W} &= -\check{\mu}_\mathrm{d} \eta_{\mathrm{d}}^\mathrm{T} \eta_{\mathrm{d}}  + \sum_{i\in\mathcal{V}} 2\eta_{i\mathrm{u}}^\mathrm{T} \Big[ A_{i\mathrm{r}} \eta_{i\mathrm{d}} + A_{i\mathrm{u}} \eta_{i\mathrm{u}}   \nonumber \\
			&\quad - \check{c}U_i^\mathrm{T} \sum_{j \in \mathcal{N}_i} a_{ij} (\bar{\eta}_{ij}-\bar{\eta}_{ji})   \Big] + \frac{\check{c}}{2} \sum_{i\in\mathcal{V}} \sum_{j\in\mathcal{N}_i} \dot{\rho}_{ij}  \nonumber \\
			&= -\check{\mu}_\mathrm{d} \eta_{\mathrm{d}}^\mathrm{T} \eta_{\mathrm{d}} + 2 \eta_\mathrm{u}^\mathrm{T} A_\mathrm{r} \eta_\mathrm{d} + 2 \eta_\mathrm{u}^\mathrm{T} A_\mathrm{u} \eta_\mathrm{u}  \nonumber \\
			&\quad  - 2\check{c} \sum_{i\in\mathcal{V}} \eta_{i\mathrm{u}}^\mathrm{T} U_i^\mathrm{T} \sum_{j \in \mathcal{N}_i} a_{ij} (\bar{\eta}_{ij}-\bar{\eta}_{ji}) + \frac{\check{c}}{2} \sum_{i\in\mathcal{V}} \sum_{j\in\mathcal{N}_i} \dot{\rho}_{ij}. \label{dot V e 1}
		\end{align}
		
		Note that 
		\begin{align}
			&-2\check{c}\sum_{i\in\mathcal{V}} \eta_{i\mathrm{u}}^\mathrm{T} U_i^\mathrm{T} \sum_{j\in\mathcal{N}_i} a_{ij} (\bar{\eta}_{ij}-\bar{\eta}_{ji}) \nonumber \\
			&= -\check{c}\sum_{i\in\mathcal{V}} (\eta_{i\mathrm{u}}^\mathrm{T} U_i^\mathrm{T} - \eta_{j\mathrm{u}}^\mathrm{T} U_j^\mathrm{T}) \sum_{j\in\mathcal{N}_i} a_{ij} (\bar{\eta}_{ij}-\bar{\eta}_{ji}) \nonumber \\
			&= -\check{c}\sum_{i\in\mathcal{V}} (\eta_{i\mathrm{u}}^\mathrm{T} U_i^\mathrm{T} - \eta_{j\mathrm{u}}^\mathrm{T} U_j^\mathrm{T}) \sum_{j\in\mathcal{N}_i} a_{ij} (U_i \bar{\eta}_{ij\mathrm{u}}- U_j \bar{\eta}_{ji\mathrm{u}}) \nonumber \\
			&\quad -\check{c} \sum_{i\in\mathcal{V}} (\eta_{i\mathrm{u}}^\mathrm{T} U_i^\mathrm{T} - \eta_{j\mathrm{u}}^\mathrm{T} U_j^\mathrm{T}) \sum_{j\in\mathcal{N}_i} a_{ij} (D_i \bar{\eta}_{ij\mathrm{d}}- D_j \bar{\eta}_{ji\mathrm{d}}). \label{eq11}
		\end{align}
		Since $\tilde{\eta}_{ij\mathrm{u}} = \bar{\eta}_{ij\mathrm{u}} - \eta_{i\mathrm{u}}$, we have
		\begin{align}
			&\sum_{i\in\mathcal{V}} (\eta_{i\mathrm{u}}^\mathrm{T} U_i^\mathrm{T} - \eta_{j\mathrm{u}}^\mathrm{T} U_j^\mathrm{T}) \sum_{j\in\mathcal{N}_i} a_{ij} (U_i \bar{\eta}_{ij\mathrm{u}}- U_j \bar{\eta}_{ji\mathrm{u}}) \nonumber \\
			&= \sum_{i\in\mathcal{V}} (\eta_{i\mathrm{u}}^\mathrm{T} U_i^\mathrm{T} - \eta_{j\mathrm{u}}^\mathrm{T} U_j^\mathrm{T}) \sum_{j\in\mathcal{N}_i} a_{ij} (U_i \eta_{i\mathrm{u}}- U_j \eta_{j\mathrm{u}}) \nonumber \\
			&\quad + \sum_{i\in\mathcal{V}} (\eta_{i\mathrm{u}}^\mathrm{T} U_i^\mathrm{T} - \eta_{j\mathrm{u}}^\mathrm{T} U_j^\mathrm{T}) \sum_{j\in\mathcal{N}_i} a_{ij} (U_i \tilde{\eta}_{ij\mathrm{u}}- U_j \tilde{\eta}_{ji\mathrm{u}}) \nonumber \\
			&= \sum_{i\in\mathcal{V}} (\eta_{i\mathrm{u}}^\mathrm{T} U_i^\mathrm{T} - \eta_{j\mathrm{u}}^\mathrm{T} U_j^\mathrm{T}) \sum_{j\in\mathcal{N}_i} a_{ij} (U_i \eta_{i\mathrm{u}}- U_j \eta_{j\mathrm{u}}) \nonumber \\
			&\quad + \sum_{i\in\mathcal{V}} (\bar{\eta}_{ij\mathrm{u}}^\mathrm{T} U_i^\mathrm{T} - \bar{\eta}_{ji\mathrm{u}}^\mathrm{T} U_j^\mathrm{T} ) \sum_{j\in\mathcal{N}_i} a_{ij} (U_i \tilde{\eta}_{ij\mathrm{u}}- U_j \tilde{\eta}_{ji\mathrm{u}}) \nonumber \\
			&\quad - \sum_{i\in\mathcal{V}} ( \tilde{\eta}_{ij\mathrm{u}}^\mathrm{T} U_i^\mathrm{T}-  \tilde{\eta}_{ji\mathrm{u}}^\mathrm{T}U_j^\mathrm{T}) \sum_{j\in\mathcal{N}_i} a_{ij}(U_i \tilde{\eta}_{ij\mathrm{u}}- U_j \tilde{\eta}_{ji\mathrm{u}}) \nonumber \\
			&= \sum_{i\in\mathcal{V}}  \sum_{j\in\mathcal{N}_i} a_{ij} \| U_i \eta_{i\mathrm{u}}- U_j \eta_{j\mathrm{u}} \|^2 \nonumber \\
			&\quad + \sum_{i\in\mathcal{V}} (\bar{\eta}_{ij\mathrm{u}}^\mathrm{T} U_i^\mathrm{T} - \bar{\eta}_{ji\mathrm{u}}^\mathrm{T} U_j^\mathrm{T} ) \sum_{j\in\mathcal{N}_i} a_{ij} (U_i \bar{\eta}_{ij\mathrm{u}}- U_j \bar{\eta}_{ji\mathrm{u}}) \nonumber \\
			&\quad - \sum_{i\in\mathcal{V}} (\bar{\eta}_{ij\mathrm{u}}^\mathrm{T} U_i^\mathrm{T} - \bar{\eta}_{ji\mathrm{u}}^\mathrm{T} U_j^\mathrm{T} ) \sum_{j\in\mathcal{N}_i} a_{ij} (U_i \eta_{i\mathrm{u}}- U_j \eta_{j\mathrm{u}}) \nonumber \\
			&\quad - \sum_{i\in\mathcal{V}}  \sum_{j\in\mathcal{N}_i} a_{ij}\| U_i \tilde{\eta}_{ij\mathrm{u}}- U_j \tilde{\eta}_{ji\mathrm{u}} \|^2
		\end{align}
		which implies that 
		\begin{align}
			&\sum_{i\in\mathcal{V}} (\eta_{i\mathrm{u}}^\mathrm{T} U_i^\mathrm{T} - \eta_{j\mathrm{u}}^\mathrm{T} U_j^\mathrm{T}) \sum_{j\in\mathcal{N}_i} a_{ij} (U_i \bar{\eta}_{ij\mathrm{u}}- U_j \bar{\eta}_{ji\mathrm{u}}) \nonumber \\
			&= \frac{1}{2} \sum_{i\in\mathcal{V}}  \sum_{j\in\mathcal{N}_i} a_{ij} \| U_i \eta_{i\mathrm{u}}- U_j \eta_{j\mathrm{u}} \|^2 \nonumber \\
			&\quad + \frac{1}{2} \sum_{i\in\mathcal{V}}  \sum_{j\in\mathcal{N}_i} a_{ij} \| U_i \bar{\eta}_{ij\mathrm{u}}- U_j \bar{\eta}_{ji\mathrm{u}} \|^2 \nonumber \\
			&\quad - \frac{1}{2} \sum_{i\in\mathcal{V}}  \sum_{j\in\mathcal{N}_i} a_{ij}\| U_i \tilde{\eta}_{ij\mathrm{u}}- U_j \tilde{\eta}_{ji\mathrm{u}} \|^2. \label{eq22}
		\end{align}
		
		Applying Young's inequality and substituting (\ref{eq22}) into (\ref{eq11}), it can be obtained that
		\begin{align}
			&-2\check{c}\sum_{i\in\mathcal{V}} \eta_{i\mathrm{u}}^\mathrm{T} U_i^\mathrm{T} \sum_{j\in\mathcal{N}_i} a_{ij} (\bar{\eta}_{ij}-\bar{\eta}_{ji}) \nonumber \\
			&\leq -\check{c}\left(\frac{1}{2}-\epsilon\right) \sum_{i\in\mathcal{V}}  \sum_{j\in\mathcal{N}_i} a_{ij} \| U_i \eta_{i\mathrm{u}}- U_j \eta_{j\mathrm{u}} \|^2 \nonumber \\
			&\quad -\frac{\check{c}}{2} \sum_{i\in\mathcal{V}}  \sum_{j\in\mathcal{N}_i} a_{ij} \| U_i \bar{\eta}_{ij\mathrm{u}}- U_j \bar{\eta}_{ji\mathrm{u}} \|^2 \nonumber \\
			&\quad + \frac{\check{c}}{2} \sum_{i\in\mathcal{V}}  \sum_{j\in\mathcal{N}_i} a_{ij}\| U_i \tilde{\eta}_{ij\mathrm{u}}- U_j \tilde{\eta}_{ji\mathrm{u}} \|^2 \nonumber \\
			&\quad + \frac{\check{c}}{\epsilon} \sum_{i\in\mathcal{V}}  \sum_{j\in\mathcal{N}_i} a_{ij} \| D_i \bar{\eta}_{ij\mathrm{d}} - D_j \bar{\eta}_{ji\mathrm{d}} \|^2 \nonumber \\
			&\leq -\check{c}\left(\frac{1}{2}-\epsilon\right) \sum_{i\in\mathcal{V}}  \sum_{j\in\mathcal{N}_i} a_{ij} \| U_i \eta_{i\mathrm{u}}- U_j \eta_{j\mathrm{u}} \|^2 \nonumber \\
			&\quad -\frac{\check{c}}{2} \sum_{i\in\mathcal{V}}  \sum_{j\in\mathcal{N}_i} a_{ij} \| U_i \bar{\eta}_{ij\mathrm{u}}- U_j \bar{\eta}_{ji\mathrm{u}} \|^2 \nonumber \\
			&\quad + 2\check{c} \sum_{i\in\mathcal{V}}  \sum_{j\in\mathcal{N}_i} a_{ij}\| \tilde{\eta}_{ij\mathrm{u}}\|^2 + \frac{2\check{c}}{\epsilon} \sum_{i\in\mathcal{V}}  \sum_{j\in\mathcal{N}_i} a_{ij} \| \bar{\eta}_{ij\mathrm{d}}\|^2, \label{eq33}
		\end{align}
		where $\epsilon$ is any constant belonging to $(0,\frac{1}{2})$.
		
		Note that $\bar{\eta}_{ij}=U_i\bar{\eta}_{ij\mathrm{u}} + D_i \bar{\eta}_{ij\mathrm{d}}$. It can be obtained that 
		\begin{align}
			&-\frac{\check{c}}{2} \sum_{i\in\mathcal{V}}  \sum_{j\in\mathcal{N}_i} a_{ij} \| U_i \bar{\eta}_{ij\mathrm{u}}- U_j \bar{\eta}_{ji\mathrm{u}} \|^2 \nonumber \\
			&= -\frac{\check{c}}{2} \sum_{i\in\mathcal{V}}  \sum_{j\in\mathcal{N}_i} a_{ij} \| \bar{\eta}_{ij} - \bar{\eta}_{ji} \|^2 \nonumber \\
			&\quad -\frac{\check{c}}{2} \sum_{i\in\mathcal{V}} \sum_{j\in\mathcal{N}_i} a_{ij} \| D_i \bar{\eta}_{ij\mathrm{d}}- D_j \bar{\eta}_{ji\mathrm{d}} \|^2 \nonumber \\
			&\quad +\check{c} \sum_{i\in\mathcal{V}}  \sum_{j\in\mathcal{N}_i} a_{ij} (\bar{\eta}_{ij}^\mathrm{T} - \bar{\eta}_{ji}^\mathrm{T}) (D_i \bar{\eta}_{ij\mathrm{d}}- D_j \bar{\eta}_{ji\mathrm{d}}) \nonumber \\
			&\leq -\frac{\check{c}}{2} \sum_{i\in\mathcal{V}}  \sum_{j\in\mathcal{N}_i} a_{ij}  \| \bar{\eta}_{ij} - \bar{\eta}_{ji} \|^2 \nonumber \\
			&\quad -\frac{\check{c}}{2} \sum_{i\in\mathcal{V}}  \sum_{j\in\mathcal{N}_i} a_{ij} \| D_i \bar{\eta}_{ij\mathrm{d}}- D_j \bar{\eta}_{ji\mathrm{d}} \|^2 \nonumber \\
			&\quad +\frac{\check{c}(1-\check{\beta})}{2} \sum_{i\in\mathcal{V}}  \sum_{j\in\mathcal{N}_i} a_{ij} \| \bar{\eta}_{ij} - \bar{\eta}_{ji} \|^2 \nonumber \\
			&\quad + \frac{\check{c}}{2(1-\check{\beta})} \sum_{i\in\mathcal{V}}  \sum_{j\in\mathcal{N}_i} a_{ij} \|D_i \bar{\eta}_{ij\mathrm{d}}- D_j \bar{\eta}_{ji\mathrm{d}} \|^2 \nonumber \\
			&= -\frac{\check{c}\check{\beta}}{2} \sum_{i\in\mathcal{V}}  \sum_{j\in\mathcal{N}_i} a_{ij} \| \bar{\eta}_{ij} - \bar{\eta}_{ji} \|^2 \nonumber \\
			&\quad +\frac{\check{c}\check{\beta}}{2(1-\check{\beta})} \sum_{i\in\mathcal{V}}  \sum_{j\in\mathcal{N}_i} a_{ij} \|D_i \bar{\eta}_{ij\mathrm{d}}- D_j \bar{\eta}_{ji\mathrm{d}} \|^2 \nonumber \\
			&\leq -\frac{\check{c}\check{\beta}}{2} \sum_{i\in\mathcal{V}}  \sum_{j\in\mathcal{N}_i} a_{ij} \| \bar{\eta}_{ij} - \bar{\eta}_{ji} \|^2 \nonumber \\
			&\quad +\frac{2\check{c}\check{\beta}}{1-\check{\beta}} \sum_{i\in\mathcal{V}}  \sum_{j\in\mathcal{N}_i} a_{ij} \| \bar{\eta}_{ij\mathrm{d}}\|^2, \label{eq44}
		\end{align}
		where $0<\check{\beta}<1$, and the first inequality is due to Young's inequality.
		
		Combining (\ref{dot V e 1}), (\ref{eq33}) and (\ref{eq44}) yields
		\begin{align}
			\dot{W} &\leq -\check{\mu}_\mathrm{d} \eta_{\mathrm{d}}^\mathrm{T} \eta_{\mathrm{d}} + 2 \eta_\mathrm{u}^\mathrm{T} A_\mathrm{r} \eta_\mathrm{d} + 2 \eta_\mathrm{u}^\mathrm{T} A_\mathrm{u} \eta_\mathrm{u}  \nonumber \\
			&\quad  -\check{c}\left(\frac{1}{2}-\epsilon\right) \sum_{i\in\mathcal{V}}  \sum_{j\in\mathcal{N}_i} a_{ij} \| U_i \eta_{i\mathrm{u}}- U_j \eta_{j\mathrm{u}} \|^2 \nonumber \\
			&\quad -\frac{\check{c}\check{\beta}}{2} \sum_{i\in\mathcal{V}}  \sum_{j\in\mathcal{N}_i} a_{ij} \| \bar{\eta}_{ij} - \bar{\eta}_{ji} \|^2 + 2\check{c} \sum_{i\in\mathcal{V}}  \sum_{j\in\mathcal{N}_i} a_{ij}\| \tilde{\eta}_{ij\mathrm{u}}\|^2 \nonumber \\
			&\quad  + 2\check{c}\zeta  \sum_{i\in\mathcal{V}}  \sum_{j\in\mathcal{N}_i} a_{ij} \| \bar{\eta}_{ij\mathrm{d}}\|^2 + \frac{\check{c}}{2} \sum_{i\in\mathcal{V}} \sum_{j\in\mathcal{N}_i} \dot{\rho}_{ij}, \label{dot V e 2}
		\end{align}
		where $\zeta = \frac{1}{\epsilon}+\frac{\check{\beta}}{1-\check{\beta}}$. Since $\bar{\eta}_{ij}-\bar{\eta}_{ji}=\bar{x}_{ij}-\bar{x}_{ji}$ and $\tilde{\eta}_{ij\mathrm{u}}=\tilde{x}_{ij\mathrm{u}}$, it follows from (\ref{dot V e 2}) that
		\begin{align}
			\dot{W} &\leq -\check{\mu}_\mathrm{d} \eta_{\mathrm{d}}^\mathrm{T} \eta_{\mathrm{d}} + 2 \eta_\mathrm{u}^\mathrm{T} A_\mathrm{r} \eta_\mathrm{d} + 2 \eta_\mathrm{u}^\mathrm{T} A_\mathrm{u} \eta_\mathrm{u}  \nonumber \\
			&\quad  -\check{c}\left(\frac{1}{2}-\epsilon\right) \sum_{i\in\mathcal{V}} \sum_{j\in\mathcal{N}_i} a_{ij} \| U_i \eta_{i\mathrm{u}}- U_j \eta_{j\mathrm{u}} \|^2 \nonumber \\
			&\quad -\frac{\check{c}\check{\beta}}{2} \sum_{i\in\mathcal{V}}  \sum_{j\in\mathcal{N}_i} a_{ij} \| \bar{x}_{ij} - \bar{x}_{ji} \|^2  \nonumber \\
			&\quad + 2\check{c} \sum_{i\in\mathcal{V}} \sum_{j\in\mathcal{N}_i} a_{ij}\| \tilde{x}_{ij\mathrm{u}}\|^2 + 2\check{c}\zeta  \sum_{i\in\mathcal{V}} \sum_{j\in\mathcal{N}_i} a_{ij} \| \bar{\eta}_{ij\mathrm{d}}\|^2 \nonumber \\
			&\quad + \frac{\check{c}}{2} \sum_{i\in\mathcal{V}} \sum_{j\in\mathcal{N}_i} \dot{\rho}_{ij}, \label{dot V e 3}
		\end{align}

		According to (\ref{dot eta id e}) and (\ref{bar eta ij}), it is easy to obtain that  
		\begin{align}
			\bar{\eta}_{ij\mathrm{d}}(t) &= \mathrm{e}^{A_{i\mathrm{d}}(t-t_k^{ij})} \eta_{i\mathrm{d}}(t_k^{ij}),  &t\in [t_k^{ij}, t_{k+1}^{ij}), \nonumber \\
			\eta_{i\mathrm{d}}(t) &= \mathrm{e}^{(A_{i\mathrm{d}}-L_{i\mathrm{d}}H_{i\mathrm{d}})(t-t_k^{ij})} \eta_{i\mathrm{d}}(t_k^{ij}), &t\in [t_k^{ij}, t_{k+1}^{ij}). \nonumber
		\end{align}
		Therefore, it follows that 
		\begin{align}
			\bar{\eta}_{ij\mathrm{d}}(t) = G_{ij} \eta_{i\mathrm{d}}(t), \label{relation between eta and eta bar e}
		\end{align}
		where $G_{ij} = \mathrm{e}^{A_{i\mathrm{d}}(t-t_k^{ij})} \mathrm{e}^{(A_{i\mathrm{d}}-L_{i\mathrm{d}}H_{i\mathrm{d}})(t_k^{ij}-t)}$.
		Notably, the event-triggering condition (\ref{ET condition e}) implies that $t-t_k^{ij} \in [0,\bar{\tau}_{ij})$, which ensures that every element of $G_{ij}$ is finite.
		
		Combing (\ref{dynamic variable e}), (\ref{dot V e 3}) and (\ref{relation between eta and eta bar e}) yields
		\begin{align}
			\dot{W} & \leq -\check{\mu}_\mathrm{d} \eta_{\mathrm{d}}^\mathrm{T} \eta_{\mathrm{d}} + 2 \eta_\mathrm{u}^\mathrm{T} A_\mathrm{r} \eta_\mathrm{d} + 2 \eta_\mathrm{u}^\mathrm{T} A_\mathrm{u} \eta_\mathrm{u}  \nonumber \\
			&\quad  -\check{c}\left(\frac{1}{2}-\epsilon\right) \sum_{i\in\mathcal{V}}  \sum_{j\in\mathcal{N}_i} a_{ij} \| U_i \eta_{i\mathrm{u}}- U_j \eta_{j\mathrm{u}} \|^2   \nonumber \\
			&\quad  + 2\check{c}\zeta  \sum_{i\in\mathcal{V}}  \sum_{j\in\mathcal{N}_i} a_{ij} \| \bar{\eta}_{ij\mathrm{d}}\|^2   - \frac{\check{c}}{2} \sum_{i\in\mathcal{V}}  \sum_{j\in\mathcal{N}_i} \delta_{ij} \rho_{ij}\nonumber \\
			&\quad + \frac{\check{c}}{2} \sum_{i\in\mathcal{V}}  \sum_{j\in\mathcal{N}_i} \gamma_{ij} \| H_i\hat{x}_i-y_i \|^2. \label{dot V e 4}
		\end{align}
		
		Note that $ H_i\hat{x}_i-y_i  = H_i \eta_i = H_i T_i T_i^\mathrm{T} \eta_i = H_{i\mathrm{d}} \eta_{i\mathrm{d}}$. It can be obtained that
		\begin{align}
			&\frac{\check{c}}{2} \sum_{i\in\mathcal{V}}  \sum_{j\in\mathcal{N}_i} \gamma_{ij} \| H_i\hat{x}_i-y_i \|^2 \nonumber \\
			&= \frac{\check{c}}{2} \sum_{i\in\mathcal{V}}  \sum_{j\in\mathcal{N}_i} \gamma_{ij} \| H_{i\mathrm{d}} \eta_{i\mathrm{d}} \|^2 \nonumber \\
			&\leq \frac{\check{c}}{2} \sum_{i\in\mathcal{V}} \bar{\gamma_i} N_i \| H_{i\mathrm{d}}^\mathrm{T} H_{i\mathrm{d}} \| \eta_{i\mathrm{d}}^\mathrm{T}  \eta_{i\mathrm{d}} \leq \frac{\check{c} \bar{\theta}}{2} \eta_\mathrm{d}^\mathrm{T} \eta_\mathrm{d}, \label{77}
		\end{align}
		where $\bar{\gamma}_i=\max_{j\in\mathcal{N}_i}\{\gamma_{ij}\}$ and $\bar{\theta}=\!\max_{i\in\mathcal{V}}\{\bar{\gamma_i} N_i  \| H_{i\mathrm{d}}^\mathrm{T} H_{i\mathrm{d}}\| \}$.
		
		According to (\ref{relation between eta and eta bar e}), we have
		\begin{align}
			&2\check{c}\zeta  \sum_{i\in\mathcal{V}}  \sum_{j\in\mathcal{N}_i} a_{ij} \| \bar{\eta}_{ij\mathrm{d}}\|^2 \nonumber \\
			&\leq 2\check{c}\zeta  \sum_{i\in\mathcal{V}}  \sum_{j\in\mathcal{N}_i} \| G_{ij}^\mathrm{T} G_{ij} \| \eta_{i\mathrm{d}}^\mathrm{T}  \eta_{i\mathrm{d}} \nonumber \\
			&\leq 2\check{c}\zeta  \sum_{i\in\mathcal{V}}  \bar{g}_i N_i \eta_{i\mathrm{d}}^\mathrm{T}  \eta_{i\mathrm{d}} \leq 2\check{c}\zeta\bar{\nu} \eta_\mathrm{d}^\mathrm{T} \eta_\mathrm{d}, \label{88}
		\end{align}
		where $\bar{g}_i= \max_{j\in\mathcal{N}_i}\{\| G_{ij}^\mathrm{T} G_{ij} \|\}$ and $\bar{\nu}=\max_{i\in\mathcal{V}}\{\bar{g}_i N_i\}$.
		
		Substitute (\ref{77}) and (\ref{88}) into (\ref{dot V e 4}), then we have
		\begin{align}
			\dot{W} & \leq -\check{\mu}_\mathrm{d} \eta_{\mathrm{d}}^\mathrm{T} \eta_{\mathrm{d}} + 2 \eta_\mathrm{u}^\mathrm{T} A_\mathrm{r} \eta_\mathrm{d} + 2 \eta_\mathrm{u}^\mathrm{T} A_\mathrm{u} \eta_\mathrm{u} \nonumber \\
			&\quad  -\check{c}\left(\frac{1}{2}-\epsilon\right) \sum_{i\in\mathcal{V}}  \sum_{j\in\mathcal{N}_i} a_{ij} \| U_i \eta_{i\mathrm{u}}- U_j \eta_{j\mathrm{u}} \|^2 \nonumber \\
			&\quad  + \left(2\check{c}\zeta\bar{\nu} + \frac{\check{c}\bar{\theta}}{2}\right) \eta_\mathrm{d}^\mathrm{T} \eta_\mathrm{d} - \frac{\check{c}}{2} \sum_{i\in\mathcal{V}}  \sum_{j\in\mathcal{N}_i} \delta_{ij} \rho_{ij} \nonumber \\
			& \leq -\check{\mu}_\mathrm{d} \eta_{\mathrm{d}}^\mathrm{T} \eta_{\mathrm{d}} + \eta_\mathrm{u}^\mathrm{T} \eta_\mathrm{u}  + \| A_\mathrm{r}^\mathrm{T} A_\mathrm{r} \| \eta_\mathrm{d}^\mathrm{T} \eta_\mathrm{d}  \nonumber \\
			&\quad + \| A_\mathrm{u}+A_\mathrm{u}^\mathrm{T} \| \eta_\mathrm{u}^\mathrm{T} \eta_\mathrm{u} - 2\check{c}\left(\frac{1}{2}-\epsilon\right) \eta_\mathrm{u}^\mathrm{T} U^\mathrm{T}(\mathcal{L}\otimes I_n) U \eta_\mathrm{u} \nonumber \\
			&\quad  + \left(2\check{c}\zeta\bar{\nu} + \frac{\check{c}\bar{\theta}}{2}\right) \eta_\mathrm{d}^\mathrm{T} \eta_\mathrm{d} - \frac{\check{c}}{2} \sum_{i\in\mathcal{V}}  \sum_{j\in\mathcal{N}_i} \delta_{ij} \rho_{ij}, \label{dot V e 5}
		\end{align}
		where the second inequality is due to Young's inequality.
		
		Let $0<\check{\alpha}<\frac{1}{\|P_\mathrm{d}\|}$ and $\underline{\delta}_e = \min_{i\in\mathcal{V}, j\in\mathcal{N}_i}\{\delta_{ij}\}$. Choose $\check{\mu}_\mathrm{d} \geq \frac{2\check{c}\zeta\bar{\nu} + \frac{\check{c}\bar{\theta}}{2} + \|A_\mathrm{r}^\mathrm{T} A_\mathrm{r}\|}{1-\check{\alpha}\|P_\mathrm{d}\|}$ and set $\check{c}$ according to (\ref{choice of c e}), then it follows from (\ref{dot V e 5}) that
		\begin{align}
			\dot{W} &\leq - \check{\alpha}\check{\mu}_\mathrm{d} \eta_\mathrm{d}^\mathrm{T} P_\mathrm{d} \eta_\mathrm{d} - \eta_{\mathrm{u}}^\mathrm{T} \eta_{\mathrm{u}} - \frac{\check{c} \underline{\delta}_e}{2} \sum_{i\in\mathcal{V}}  \sum_{j\in\mathcal{N}_i} \delta_{ij} \rho_{ij} \nonumber \\
			& \leq -\check{\xi} W, \nonumber
		\end{align}
		where $\check{\xi}=\min\{\check{\alpha}, 1, \underline{\delta}_e \}$. According to the Comparison Principle \cite[Lemma 3.4]{khalil2002nonlinear}, we have $W \leq W(0) \mathrm{e}^{-\check{\xi} t}$ for all $t\geq 0$, which implies that both $\eta_\mathrm{d}$ and $\eta_\mathrm{u}$ converge to zero exponentially. The proof is completed.
	\end{proof}
	
	\subsection{Inter-Event Time Analysis}
	In the following, we demonstrate that the edge-based event-triggering mechanism (\ref{ET condition e}) ensures positive MIETs. Define the inter-event time as $\tau_k^{ij} \triangleq t_{k+1}^{ij} - t_k^{ij}$.
	\begin{theorem}
		Under the conditions in Theorem \ref{theorem 4.1}, it holds that  $\inf_{k\in\mathbb{N}} \{ \tau_k^{ij} \} > 0$ for all $(i,j)\in\mathcal{E}$.
	\end{theorem}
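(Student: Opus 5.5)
We mirror the comparison-function argument used for the node-based case, now applied edge by edge. For a fixed edge $(i,j)\in\mathcal{E}$ we introduce
\begin{align}
\bar{\psi}_{ij} = \frac{4a_{ij}\|\tilde{x}_{ij\mathrm{u}}\|^2}{\frac{\check{\beta}}{N_i}\sum_{k\in\mathcal{N}_i} a_{ik}\|\bar{x}_{ik}-\bar{x}_{ki}\|^2+\kappa_{ij}\rho_{ij}}, \quad \psi_{ij}=\frac{\|\tilde{x}_{ij}\|^2}{\rho_{ij}}. \nonumber
\end{align}
Both functions vanish at $t_k^{ij}$, since $\tilde{x}_{ij}(t_k^{ij})=0$. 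By (\ref{ET condition e}), $\tau_k^{ij}$ is at least the time $\bar{\psi}_{ij}$ needs to go from $0$ to $1$; if instead the timeout $\bar{\tau}_{ij}$ fires first, the interval is already at least $\bar{\tau}_{ij}>0$.

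We first drop the nonnegative sum in the denominator. Because $T_i$ is orthonormal, $\|\tilde{x}_{ij\mathrm{u}}\|\le\|\tilde{x}_{ij}\|$. Together these give $\bar{\psi}_{ij}\le \frac{4a_{ij}}{\kappa_{ij}}\psi_{ij}$. Hence it suffices to lower bound the time for $\psi_{ij}$ to rise from $0$ to $\frac{\kappa_{ij}}{4a_{ij}}$.

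Next we differentiate $\psi_{ij}$. On $[t_k^{ij},t_{k+1}^{ij})$ we have $\dot{\bar{x}}_{ij}=A\bar{x}_{ij}$, so
$$\dot{\tilde{x}}_{ij}=A\tilde{x}_{ij}+L_iH_i\eta_i+\check{c}M_i\sum_{k\in\mathcal{N}_i}a_{ik}(\bar{x}_{ik}-\bar{x}_{ki}).$$
Here we used $L_i(y_i-H_i\hat{x}_i)=-L_iH_i\eta_i$. The quantity $\bar{x}_{ik}$ may jump when another edge $(i,k)$ triggers, but it is piecewise continuous and $\psi_{ij}$ itself remains continuous, so the differential inequality holds almost everywhere. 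Substituting (\ref{dynamic variable e}) for $\dot{\rho}_{ij}$ produces the following contributions:
\begin{itemize}
\item[(a)] $(\|A+A^\mathrm{T}\|+\delta_{ij})\psi_{ij}$;
\item[(b)] $4a_{ij}\psi_{ij}^2$, from the term $+4a_{ij}\|\tilde{x}_{ij\mathrm{u}}\|^2\|\tilde{x}_{ij}\|^2/\rho_{ij}^2$, bounded using $\|\tilde{x}_{ij\mathrm{u}}\|\le\|\tilde{x}_{ij}\|$;
\item[(c)] the cross term $2\tilde{x}_{ij}^\mathrm{T}L_iH_i\eta_i/\rho_{ij}$, paired with $-\gamma_{ij}\|\tilde{x}_{ij}\|^2\|H_i\eta_i\|^2/\rho_{ij}^2$, which by Young's inequality is at most $\|L_i^\mathrm{T}L_i\|/\gamma_{ij}$;
\item[(d)] for each $k\in\mathcal{N}_i$, the cross term $2\check{c}a_{ik}\tilde{x}_{ij}^\mathrm{T}M_i(\bar{x}_{ik}-\bar{x}_{ki})/\rho_{ij}$, paired with $-\frac{\check{\beta}}{N_i}a_{ik}\|\tilde{x}_{ij}\|^2\|\bar{x}_{ik}-\bar{x}_{ki}\|^2/\rho_{ij}^2$, which by Young's inequality is at most $\check{c}^2a_{ik}N_i\|M_i^\mathrm{T}M_i\|/\check{\beta}$.
\end{itemize}
Summing (d) over $k$ gives the constant $\check{c}^2 l_{ii}N_i\|M_i^\mathrm{T}M_i\|/\check{\beta}$.

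This step is the crux, and it is the only genuinely new point compared with the node-based proof. It works because the consensus term in $\dot{\tilde{x}}_{ij}$ involves exactly the neighbor differences $\bar{x}_{ik}-\bar{x}_{ki}$ that appear, weighted by $\check{\beta}/N_i$, in (\ref{dynamic variable e}). That matching is why every $\rho_{ij}$ for $j\in\mathcal{N}_i$ carries the whole averaged sum over $\mathcal{N}_i$; a term involving only edge $(i,j)$ could not absorb the other neighbors' contributions. The price is the extra factor $N_i$, which is still a constant.

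Finally, we have $\dot{\psi}_{ij}\le g_{ij}(\psi_{ij})$ with
$$g_{ij}(s)=\frac{\check{c}^2l_{ii}N_i\|M_i^\mathrm{T}M_i\|}{\check{\beta}}+\frac{\|L_i^\mathrm{T}L_i\|}{\gamma_{ij}}+(\|A+A^\mathrm{T}\|+\delta_{ij})s+4a_{ij}s^2,$$
which is positive and independent of $k$. By the Comparison Principle \cite[Lemma 3.4]{khalil2002nonlinear}, $\psi_{ij}\le\phi_{ij}$, where $\dot{\phi}_{ij}=g_{ij}(\phi_{ij})$ and $\phi_{ij}(t_k^{ij})=0$. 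Therefore
$$\tau_k^{ij}\ge \underline{\tau}_{ij}:=\min\Big\{\bar{\tau}_{ij},\int_0^{\kappa_{ij}/(4a_{ij})}\frac{\mathrm{d}s}{g_{ij}(s)}\Big\}>0,$$
uniformly in $k$, which proves the claim for every $(i,j)\in\mathcal{E}$.
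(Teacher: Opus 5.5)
Your proposal is correct and follows the paper's proof essentially step for step. It uses the same comparison functions $\bar{\psi}_{ij}$ and $\psi_{ij}$, the same bound $\bar{\psi}_{ij}\le \frac{4a_{ij}}{\kappa_{ij}}\psi_{ij}$, the same Young's-inequality pairings leading to the constant $\check{c}^2 l_{ii}N_i\|M_i^\mathrm{T}M_i\|/\check{\beta}$, and the same comparison with $\dot{\phi}_{ij}=g_{ij}(\phi_{ij})$; taking the minimum with $\bar{\tau}_{ij}$ and justifying the inequality almost everywhere despite jumps in $\bar{x}_{ik},\bar{x}_{ki}$ are small refinements the paper leaves implicit.
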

	
	\begin{proof}
		Define the following two comparison functions:
		\begin{align}
			&\bar{\psi}_{ij} = \frac{4a_{ij} \| \tilde{x}_{ij\mathrm{u}} \|^2  }{\frac{\check{\beta}}{N_i} \sum_{k \in \mathcal{N}_i} a_{ik} \| \bar{x}_{ik} -  \bar{x}_{ki} \|^2 + \kappa_{ij} \rho_{ij}}, \nonumber \\
			&\psi_{ij} = \frac{ \| \tilde{x}_{ij} \|^2  }{ \rho_{ij}}. \nonumber
		\end{align}
		According to the edge-based event-triggering mechanism (\ref{ET condition e})-(\ref{dynamic variable e}), the inter-event time $\tau_k^{ij}$ is the time required for $\bar{\psi}_{ij}$ to transition from $0$ to $1$ for the first time after $t_k^{ij}$. It holds that
		\begin{align}
			\bar{\psi}_{ij} \leq \frac{4a_{ij}\| \tilde{x}_{ij\mathrm{u}} \|^2 }{\kappa_{ij} \rho_{ij}}. \label{relation between bar psi and psi e}
		\end{align}
		Similar to (\ref{tilde x i square}), we have 
		\begin{align}
			\| \tilde{x}_{ij}  \|^2 &= \| U_i \tilde{x}_{ij\mathrm{u}} + D_i \tilde{x}_{ij\mathrm{d}}   \|^2 = \| \tilde{x}_{ij\mathrm{u}} \|^2 + \Vert \tilde{x}_{ij\mathrm{d}} \Vert^2, \nonumber
		\end{align}
		which implies that $\| \tilde{x}_{ij\mathrm{u}} \|^2 \leq \| \tilde{x}_{ij}  \|^2$. Therefore, it follows from (\ref{relation between bar psi and psi e}) that 
		\begin{align}
			\bar{\psi}_{ij} \leq  \frac{4a_{ij}\| \tilde{x}_{ij} \|^2 }{\kappa_{ij} \rho_{ij}} = \frac{4a_{ij}}{\kappa_{ij}} \psi_{ij}, \label{bar psi ij ineq}
		\end{align}
		which implies that the inter-event time $\tau_k^{ij}$ is lower bounded by the time required for $\psi_{ij}$ to transfer from $0$ to $\frac{\kappa_{ij}}{4 a_{ij}}$.	Then similar to the derivations as in (\ref{dot psi_i 1})--(\ref{dot psi_i component}), it can be shown that
		\begin{align}
			\dot{\psi}_{ij} &= \frac{2 \tilde{x}_{ij}^\mathrm{T} \dot{\tilde{x}}_{ij}}{\rho_{ij}} - \frac{\| \tilde{x}_{ij} \|^2}{\rho_{ij}^2} \dot{\rho}_{ij} \nonumber \\
			&= \frac{2 \tilde{x}_{ij}^\mathrm{T} }{\rho_{ij}} \Big[A\tilde{x}_{ij} + L_i H_i \eta_i + \check{c}M_i \sum_{k\in\mathcal{N}_i} a_{ik}(\bar{x}_{ik} - \bar{x}_{ki})\Big] \nonumber \\
			&\quad - \frac{\| \tilde{x}_{ij} \|^2}{\rho_{ij}^2} \Big(-\delta_{ij} \rho_{ij}(t) - 4a_{ij} \| \tilde{x}_{ij\mathrm{u}}(t) \|^2   \nonumber \\
			&\quad     +  \frac{\check{\beta}}{N_i} \sum_{k \in \mathcal{N}_i} a_{ik} \| \bar{x}_{ik}(t) -  \bar{x}_{ki}(t) \|^2 + \gamma_{ij} \| H_i\hat{x}_i-y_i \|^2 \Big) \nonumber \\
			&\leq (\|A + A^\mathrm{T}\| + \delta_{ij}) \frac{ \| \tilde{x}_{ij} \|^2  }{ \rho_{ij}} + 4a_{ij} \frac{ \| \tilde{x}_{ij} \|^4  }{ \rho_{ij}^2} \nonumber \\
			&\quad  + \frac{2\tilde{x}_{ij}^\mathrm{T} L_i H_i \eta_i}{\rho_{ij}}  - \frac{\gamma_{ij} \| \tilde{x}_{ij} \|^2}{\rho_{ij}^2} \eta_i^\mathrm{T}H_i^\mathrm{T}H_i \eta_i  \nonumber \\
			&\quad + \frac{2\tilde{c}\tilde{x}_{ij}^\mathrm{T} M_i \sum_{k\in\mathcal{N}_i}a_{ik}(\bar{x}_{ik} - \bar{x}_{ki})}{\rho_{ij}} \nonumber \\
			&\quad - \frac{\check{\beta} \| \tilde{x}_{ij} \|^2  }{ N_i \rho_{ij}^2} \sum_{k\in\mathcal{N}_i} a_{ik}\| \bar{x}_{ik} - \bar{x}_{ki} \|^2,  \label{dot psi ij}
		\end{align}
		By Young's inequality, we have
		\begin{align}
			&\frac{2\check{c}\tilde{x}_{ij}^\mathrm{T} M_i (\bar{x}_{ik} - \bar{x}_{ki})}{\rho_{ij}} - \frac{\check{\beta} \| \tilde{x}_{ij} \|^2  }{ N_i \rho_{ij}^2} \| \bar{x}_{ik} - \bar{x}_{ki} \|^2  \nonumber \\
			&\leq \frac{\|M_i^\mathrm{T} M_i\|N_i \check{c}^2}{\check{\beta}} \label{dot psi ij component 1}
		\end{align}
		and
		\begin{align}
			\frac{2\tilde{x}_{ij}^\mathrm{T} L_i H_i \eta_i}{\rho_{ij}} - \frac{\gamma_{ij} \| \tilde{x}_{ij} \|^2}{\rho_{ij}^2} \eta_i^\mathrm{T}H_i^\mathrm{T}H_i \eta_i \leq \frac{\|L_i^\mathrm{T}L_i\|}{\gamma_{ij}}. \label{dot psi ij component 2}
		\end{align}
		Substitute (\ref{dot psi ij component 1}) and (\ref{dot psi ij component 2}) into (\ref{dot psi ij}), then we have
		\begin{align}
			\dot{\psi}_{ij} &\leq \frac{\|M_i^\mathrm{T} M_i\|N_i l_{ii} \check{c}^2}{\check{\beta}} + \frac{\|L_i^\mathrm{T}L_i\|}{\gamma_{ij}}  \nonumber \\
			&\quad\, + (\|A + A^\mathrm{T}\| + \delta_{ij})\psi_{ij} + 4a_{ij}\psi_{ij}^2.
		\end{align}
		
		According to the Comparison Principle \cite[Lemma 3.4]{khalil2002nonlinear}, it holds that $\psi_{ij}(t) \leq \phi_{ij}(t)$ for all $t\geq 0$, where $\phi_{ij}$ is the solution of the following equation:
		\begin{align}
			\dot{\phi}_{ij} = g_{ij}(\phi_{ij}), \quad t\in[t_k^{ij},t_{k+1}^{ij}), \nonumber
		\end{align}
		where $\phi_{ij}(t_k^{ij+}) = \psi_{ij}(t_k^{ij+}) = 0$ and $g_{ij}(\phi_{ij}) \triangleq \frac{\|M_i^\mathrm{T} M_i\|N_i l_{ii} \check{c}^2}{\check{\beta}} + \frac{\|L_i^\mathrm{T}L_i\|}{\gamma_{ij}} + (\|A + A^\mathrm{T}\| + \delta_{ij})\phi_{ij}  + 4a_{ij}\phi_{ij}^2$.
		
		Note that the time for $\phi_{ij}$ to transfer from $0$ to $\frac{\kappa_{ij}}{4a_{ij}}$ is 
		\begin{align}
			\underline{\tau}_{ij} = \int_{0}^{\frac{\kappa_{ij}}{4a_{ij}}} \frac{1}{g_{ij}(s)} \mathrm{d} s. \nonumber
		\end{align}
		Obviously, $\underline{\tau}_{ij}$ is a positive constant. Since $\psi_{ij}(t) \leq \phi_{ij}(t)$ and the inequality (\ref{bar psi ij ineq}) holds, $\underline{\tau}_{ij}$ is the lower bound of the time required by $\bar{\psi}_{ij}$ to evolve from $0$ to $1$. Hence, the inter-event times $\tau_k^{ij}$ satisfy $\inf_{k\in\mathbb{N}}\{\tau^{ij}_k\} \geq \underline{\tau}_{ij} >0$ for all $(i,j)\in\mathcal{E}$, which completes the proof.
	\end{proof}
	
	\begin{remark}
		In \cite{li2025distributed}, a distributed observer with an edge-based switching event-triggering mechanism is proposed. The event-triggering mechanism switches between a periodic event-triggering mechanism and a continuous one. The primary distinctions between our method and the one in \cite{li2025distributed} lie in the following aspects: i) The observer design in \cite{li2025distributed} requires solving a LMI whose feasibility is not guaranteed. Specifically, a necessary condition for the feasibility of the LMI is the stability of the undetectable subsystem, i.e., $A_{i\mathrm{u}}$ is Hurwitz for $i\in \mathcal{V}$. In contrast, our method imposes no such constraint, thus offering broader applicability; ii) In \cite{li2025distributed}, edges $(i,j)$ and $(j,i)$ are triggered synchronously, while in our approach, all edges are triggered asynchronously; iii) Only asymptotic convergence of the estimation error can be ensured by the design in \cite{li2025distributed}, while we establish exponential convergence of the estimation error in this work; iv) In \cite{li2025distributed}, positive MIETs are guaranteed by a period that is additionally embedded into the triggering process, while in our work, the existence of positive MIET is an inherent property of the designed dynamic event-triggering mechanisms.
	\end{remark}
	\section{Numerical Examples} \label{Numerical Examples}
	
	\begin{figure}[htb]
		\centering
		\includegraphics[width=1.2in]{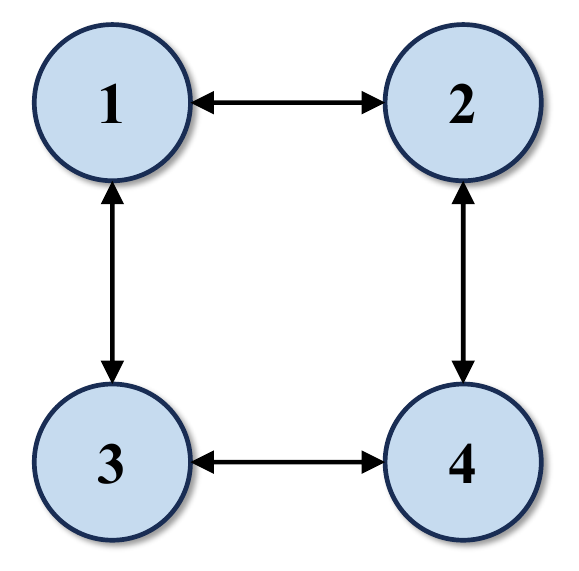}
		\caption{The communication graph among agents.}
		\label{fig1}
	\end{figure}

	In this section, we use numerical examples from \cite{kim2019completely} to illustrate the effectiveness of our approach. Specifically, a three-inertia system is considered, which is observed by four agents. The communication topology among agents is represented by an undirected graph, as illustrated in Fig. \ref{fig1}. The system dynamics and measurements are modeled by (\ref{LTI system}) and (\ref{y}) with
	\begin{align}
		A &= \begin{bmatrix}
			0 & 1 & 0 & 0 & 0 & 0 \\
			-\frac{k}{J}& 0 & \frac{k}{J} & 0 & 0 & 0  \\
			0 & 0 & 0 & 1 & 0 & 0  \\
			\frac{k}{J} & 0 & -\frac{2k}{J}& 0 & \frac{k}{J} & 0 \\
			0 & 0 & 0 & 0 & 0 & 1 \\
			0 & 0 & \frac{k}{J} & 0 & -\frac{k}{J} & 0
		\end{bmatrix},  \label{A sim} \\
		H &= \begin{bmatrix}
			H_1 \\
			H_2 \\
			H_3 \\
			H_4
		\end{bmatrix} = \begin{bmatrix}
			0 & 0 & 1 & 0 & 0 & 0 \\
			1 & 0 & -1& 0 & 0 & 0\\
			0 & 0 & 1 & 0 & -1& 0\\
			1 & 0 & 0 & 0 & -1& 0
		\end{bmatrix}, \label{H sim}
	\end{align}
	where $k$ denotes the torsional stiffness, $J$ denotes the moment of inertia, and $\frac{k}{J}$ is set to $1$ in the simulations. Besides, it can be seen that none of the pairs $(A,H_i)$ is detectable while the pair $(A,H)$ is detectable. 
	
	For the simulation purpose, the initial states of the system and local observers are selected randomly. We choose $L_{i\mathrm{d}}$ to place the eigenvalues of $A_{i\mathrm{d}} - L_{i\mathrm{d}} H_{i\mathrm{d}}$ at $\{ -1,-1,-1,-1 \}$ for $i=1,2,3$ and $\{-1,-1\}$ for $i=4$. In the following, we present three examples to verify the effectiveness of our method.
	
	\begin{figure}[!h]
	\centering
	\includegraphics[ width=\columnwidth]{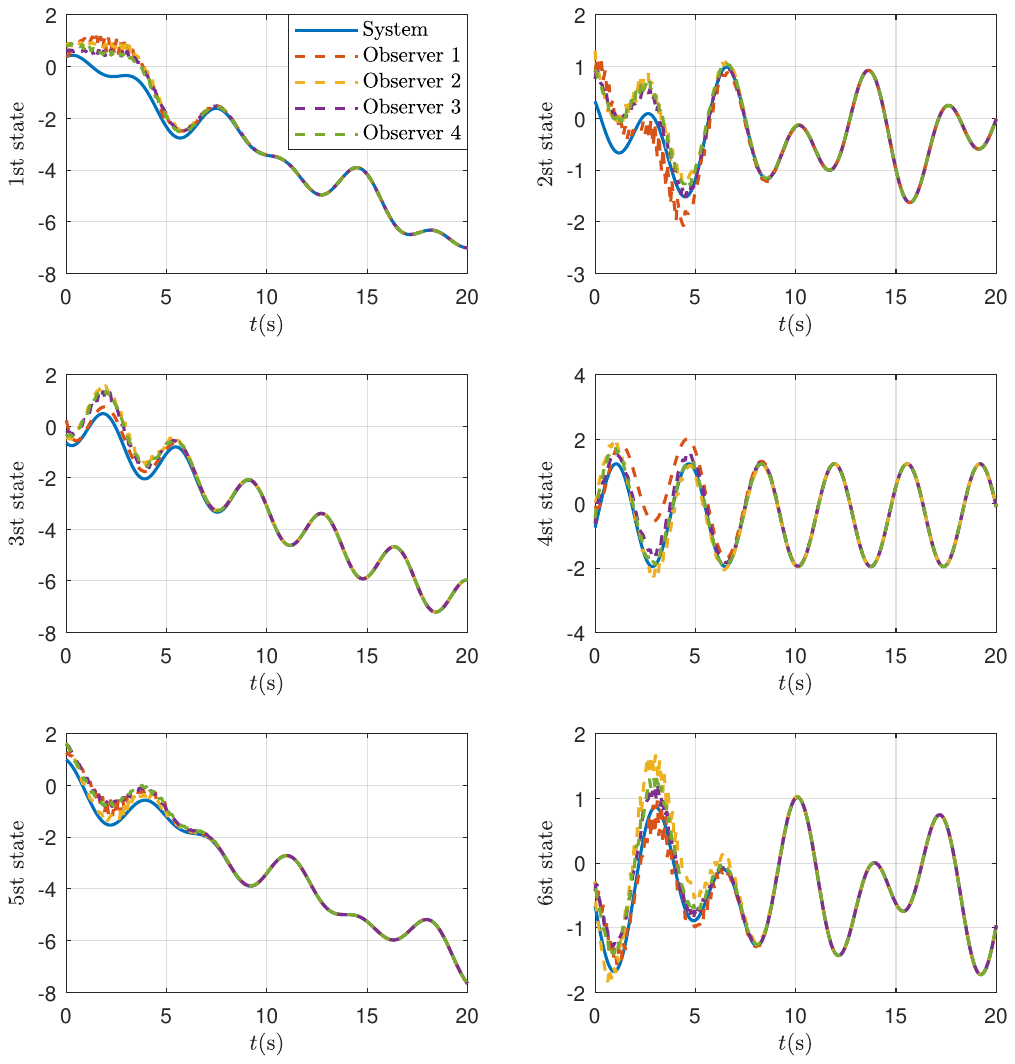}
	\caption{States of the system and the observers (node-based).}
	\label{State trajectory}
\end{figure}

\begin{figure}[htb]
	\centering
	\includegraphics[ width=\columnwidth]{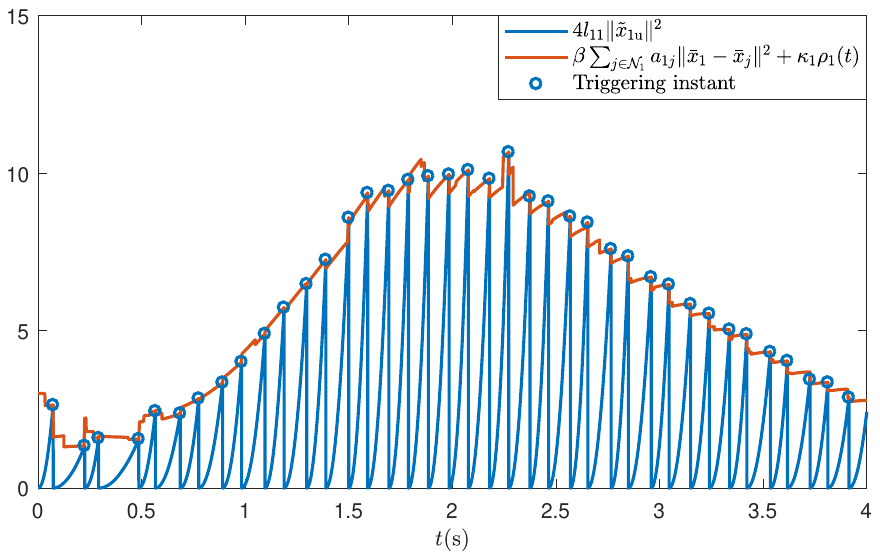}
	\caption{Triggering function of observer $1$ over $0-4$ seconds.}
	\label{triggering function n fig}
\end{figure}

\begin{figure}[htb]
	\centering
	\includegraphics[ width=\columnwidth]{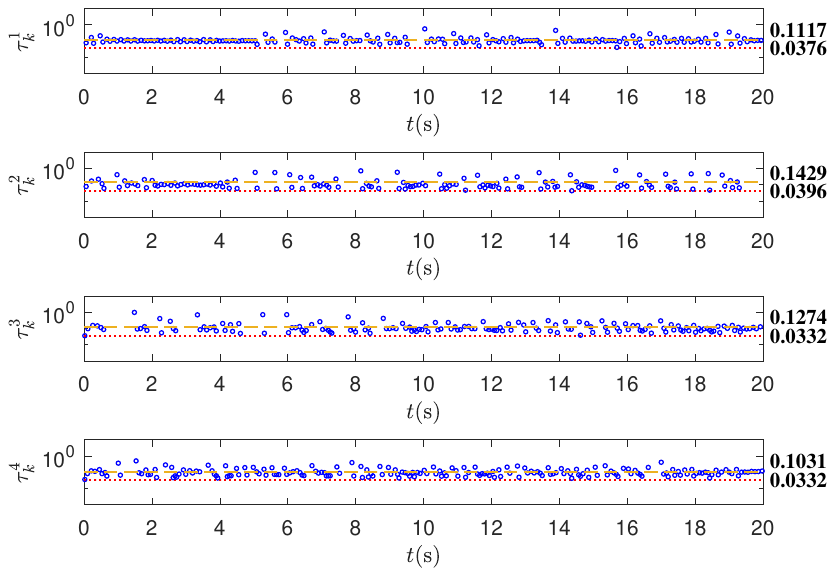}
	\caption{Event intervals $\tau_k^i$, $k\in\mathbb{N}$, for each agent. The dash and dotted lines together with the corresponding numbers indicate the mean and minimum of $\tau_k^i$ over $k$, respectively. The horizontal and vertical coordinates of each circle ``$\circ$'' are the event instants and event intervals, respectively.}
	\label{interval}
\end{figure}

\begin{figure}[htbp]
	\centering
	\includegraphics[ width=\columnwidth]{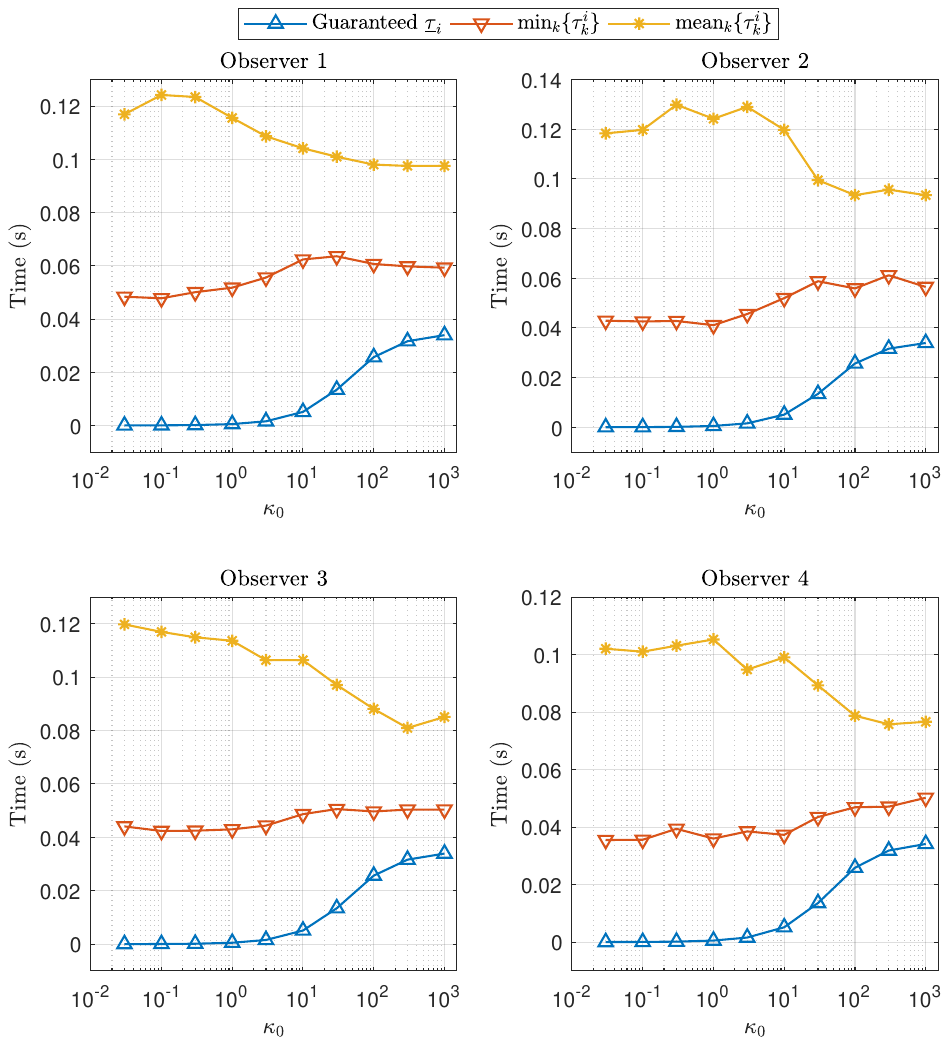}
	\caption{Guaranteed level $\underline{\tau}_i$, $\min_k\{\tau_k^i\}$ and $\mathrm{mean}_k\{\tau_k^i\}$ for different $\kappa_0$.}
	\label{lower bound of MIETs}
\end{figure}
	
	\textit{Example 1 (Node-based Event-Triggering Mechanism):} For the node-based event-triggered distributed observers in Section \ref{Node-based Event-Triggered Distributed Observer}, we set $c = 10.3$, $\beta = 0.9$, $\rho_i(0) = 1$, $\kappa_i = \kappa_0 = 0.03$, $\delta_i = 2$ and $\gamma_i = 10$ for $i\in\mathcal{V}$. Simulation results are shown in Figs. \ref{State trajectory}--\ref{lower bound of MIETs}. Figure \ref{State trajectory} shows that each state component of the system can be asymptotically estimated by the designed local observers. Figure \ref{triggering function n fig} illustrates the evolution of observer $1$'s triggering function (\ref{ET function n}), which demonstrates that the designed triggering mechanism (\ref{ET condition n}) is correctly executed. Figure \ref{interval} plots the inter-event times for each agent, showing that the event-triggering mechanism can guarantee strictly positive MIETs. Note that we set $\kappa_i = \kappa_0$ for $i\in\mathcal{V}$. Then, we assign different values to $\kappa_0$ to illustrate the effects of $\kappa_i$. Figure \ref{lower bound of MIETs} shows the guaranteed level $\underline{\tau}_i$ of MIETs, $\min_k\{\tau_k^i\}$ and $\mathrm{mean}_k\{\tau_k^i\}$ under different $\kappa_0$, where $\min_k\{\tau_k^i\}$ and $\mathrm{mean}_k\{\tau_k^i\}$ denote the minimum and the average of the sequence $\{\tau_k^i\}$, respectively. As can be seen from Fig. \ref{lower bound of MIETs}, the guaranteed level $\underline{\tau}_i$ increases with $\kappa_0$, which indicate that a larger $\kappa_0$ leads to larger lower bounds of MIETs. It can also be concluded from Fig. \ref{lower bound of MIETs} that the theoretical lower bound of MIET $\underline{\tau}_i$ calculated from (\ref{theoretical lower bound of MIET n}) is effective.

		\begin{figure}[htbp]
		\centering
		\includegraphics[ width=\columnwidth]{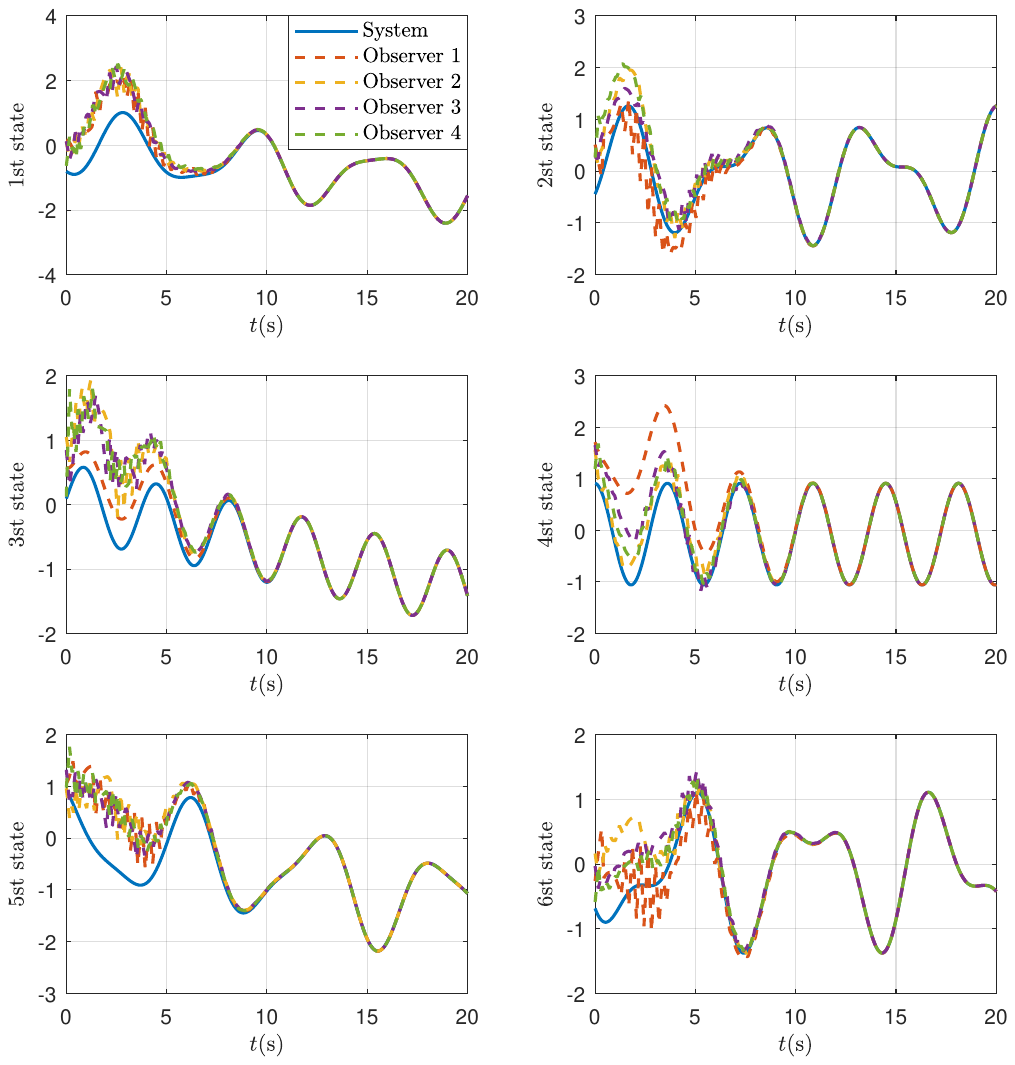}
		\caption{States of the system and the observers (edge-based).}
		\label{State trajectory e}
	\end{figure}

	\begin{table}[htbp]
		\centering
		\caption{The theoretical guaranteed level $\underline{\tau}_{ij}$ and the actual MIET $\min_k\{\tau_k^{ij}\}$}
		\label{conservative study}
		\renewcommand{\arraystretch}{1.2}
		\begin{tabular}{ccc} 
			\toprule
			Edge $(i,j)$ & $\underline{\tau}_{ij}$ ($\mathrm{s}$)& $\min_k\{\tau_k^{ij}\}$ ($\mathrm{s}$) \\
			\midrule
			(1,2) & 0.054 & 0.113 \\
			(2,1) & 0.053 & 0.099 \\
			(1,3) & 0.054 & 0.112 \\
			(3,1) & 0.053 & 0.107 \\		
			(2,4) & 0.053 & 0.099 \\		
			(4,2) & 0.060 & 0.093 \\
			(3,4) & 0.053 & 0.109 \\		
			(4,3) & 0.060 & 0.090 \\
			\bottomrule
		\end{tabular}
	\end{table}
	
	\textit{Example 2 (Edge-based Event-Triggering Mechanism):} For the edge-based event-triggered distributed observers in Section \ref{Edge-based Event-Triggered Distributed Observer}, we set $\check{c} = 5.3$, $\epsilon=0.01$, $\check{\beta} = 0.9$, $\rho_{ij}(0) = 1$, $\kappa_{ij} = 100$, $\delta_{ij} = 1$ and $\gamma_{ij} = 1$ for $(i,j)\in\mathcal{E}$. Simulation results are shown in Fig. \ref{State trajectory e} and Table \ref{conservative study}. From Fig. \ref{State trajectory e}, it can be seen that the system states can be asymptotically estimated by the edge-based event-triggered distributed observers. Table \ref{conservative study} shows the guaranteed level $\underline{\tau}_{ij}$ and the actual MIET $\min_k\{\tau_k^{ij}\}$ for each edge, which demonstrates that the obtained theoretical level $\underline{\tau}_{ij}$ provides an effective lower bound of the inter-event time.
	
	\begin{table}[htbp]
		\centering
		\caption{Comparison of theoretical lower bounds of MIETs}
		\label{comparative study}
		\renewcommand{\arraystretch}{1.2}
		\begin{tabular}{ccc} 
			\toprule
			Agent $i$ & Ours & \cite{zhu2025hybrid}  \\
			\midrule
			1 & $0.0131$ $\mathrm{s}$& $4.34 \times 10^{-7}$ $\mathrm{s}$\\
			2 & $0.0132$ $\mathrm{s}$& $4.34 \times 10^{-7}$ $\mathrm{s}$\\
			3 & $0.0131$ $\mathrm{s}$& $4.34 \times 10^{-7}$ $\mathrm{s}$\\
			4 & $0.0133$ $\mathrm{s}$& $4.34 \times 10^{-7}$ $\mathrm{s}$\\		
			\bottomrule
		\end{tabular}
	\end{table}	
	
	\textit{Example 3 (Comparison with \cite{zhu2025hybrid} and \cite{li2025distributed}):} For the node-based event-triggered mechanism, we compare our method with \cite{zhu2025hybrid}, focusing primarily on the theoretical lower bounds of MIETs. For parity, the relevant parameters in \cite{zhu2025hybrid} are set as $L_\omega^i=4$ and  $\lambda_i = 0.25$ for $i\in\mathcal{V}$. We set the parameters in (\ref{choice of c n}), (\ref{ET function n}) and (\ref{dynamic variable n}) as $c = 10.3$, $\beta = 0.9$, $\rho_i(0) = 1$, $\kappa_i = \kappa_0 = 30$, $\delta_i = 6$ and $\gamma_i = 10$ for $i\in\mathcal{V}$. The results are shown in Table \ref{comparative study}, from which it can be seen that the theoretical lower bounds of MIETs in \cite{zhu2025hybrid} are more conservative than ours. For the edge-based event-triggering mechanism, with the system matrix and measurement matrix set as in (\ref{A sim}) and (\ref{H sim}), the method in \cite{li2025distributed} becomes inapplicable because the LMI (9) in \cite{li2025distributed} is infeasible.

	\section{Conclusion} \label{Conclusion}
	This paper investigates the distributed state estimation problem under event-triggered communication. We propose a dynamic event-triggered distributed observer that enables each agent to reconstruct the system state in a distributed manner. Our work focuses on the design of  dynamic event-triggering mechanisms and the analysis of the resulting inter-event intervals. We prove that the dynamic event-triggering mechanism can guarantee positive MIETs for distributed observers. The exponential convergence of the estimation error is also proven. Besides, both node-based and edge-based event-triggering mechanisms are proposed. Note that this paper only considers an undirected communication graph. Therefore, the extension to directed communication topologies is an interesting direction for future research.

\bibliographystyle{plain}        
\bibliography{Ref}           

\end{document}